\newif\ifLIPICS
\newtheorem{definition}{Definition}
\newtheorem{example}{Example}
\newtheorem{theorem}{Theorem}
\newtheorem{lemma}{Lemma}
\newtheorem{proposition}{Proposition}
\newtheorem{proof}{Proof}
\DeclareFontFamily{OT1}{rsfs}{}
\DeclareFontShape{OT1}{rsfs}{m}{n}{ <4-6> rsfs5 <6-9> rsfs7 <9-> rsfs10 }{}
\DeclareMathAlphabet{\mathscript}{OT1}{rsfs}{m}{n}
\DeclareMathAlphabet{\mathbi}{OML}{cmm}{bx}{it}
\let\emptyset=\varnothing
\let\emptyset=\jcbemptyset\fi
\def\rest{\mathord{\upharpoonright}}
\newcommand{\rs}{$R$-stopped}
\newcommand{\bs}{$B$-stopped}
\newcommand{\es}{\mathcal{E}}
\newcommand{\confs}{\mathcal{V}}
\newcommand{\un}{\mathcal{U}}
\newcommand{\gpn}{\mathcal{G}}
\newcommand{\unf}{unfolding\-$^-$}
\newcommand{\occ}{occurrence\-$^-$}
\newcommand{\prob}{\mathbb{P}}
\newcommand{\dotminus}{\mathbin{\text{\@dotminus}}}
\newcommand{\@dotminus}{%
  \ooalign{\hidewidth\raise1ex\hbox{.}\hidewidth\cr$\m@th-$\cr}%
}
\def\checkstop{\afterassignment\checkstopaux\let\next= }
\def\checkstopaux{\ifx s\next\def\next{}\fi\next}
\def\abbres/{\textsc{es}\checkstop}
\def\abbrees/{\textsc{ges}\checkstop}
\def\abbrses/{\textsc{ses}\checkstop}
\def\abbrpes/{\textsc{pes}\checkstop}
\def\abbrpres/{\textsc{p}{\small r}\textsc{es}\checkstop}
\def\Thetahat{\hat\Theta}
\newtheorem{proposition}[theorem]{Proposition}
\begin{document}


\ifLIPICS
\else
\begin{frontmatter}
\fi

\title{On probabilistic stable event structures}

\author{Nargess Ghahremani and Julian Bradfield}

\ifLIPICS
\else
\let\affil=\address
\fi  

\affil{Laboratory for Foundations of Computer Science, University of
Edinburgh,\\ 10 Crichton St, Edinburgh, EH8 9AB, U.K.\\ \texttt{jcb@inf.ed.ac.uk}}

\ifLIPICS
\Copyright{N. Ghahremani and J. C. Bradfield}

\subjclass{F.1.2}
\keywords{concurrency, probability, semantics}

\maketitle              

\fi

\ifLIPICS
\else

\end{frontmatter}
\fi

\thispagestyle{plain}
\section{Introduction}

Verification has been a core concern of theoretical computer science
for several decades. As most real systems are in some sense
distributed or concurrent, one has to decide how to model
concurrency. Broadly, there are two schools: the interleaving approach
imposes a global `clock' on the system, and says that independent
concurrent events occur in some order, but that order is arbitrary --
encapsulated by the CCS interleaving law $a | b = a.b + b.a$. `True
concurrency', on the other hand, represents concurrency explicitly in
the semantics of the system, as exemplified by Petri nets. There are
many arguments about the merits, ranging from the practical
-- e.g.\ that true concurrent models avoid the
exponential state explosion arising from arbitrary interleaving -- to
the philosophical -- one should understand concurrency in its own
right.

Another extension of classical computation is probability, whether
used to model genuine randomness, or uncertainty. Probabilistic models
have in the last two or three decades also been the subject of
intensive research, and by now there are many widely used tools for
practical verification of probabilistic systems, based on a range of
different theoretical models. A small selection of
established models and systems might be
\cite{rabin1963probabilistic,puterman2009markov,vardi1985automatic,pnueli1993probabilistic,yi1992testing,segala1995probabilistic,de1997formal,baier1998algorithmic}. Such work adopts the
`interleaving' approach to concurrency -- that is, from the true
concurrency proponent's point of view, it ignores concurrency.

The topic of this paper is the combination of probability with true
concurrency. As well as the
core interleaving vs.\ true concurrency distinction, 
there is another crucial difference between the two in the probabilistic
framework: in probability, one cares about the probabilistic
(in)dependence of events, and in true concurrency this relation is (at
least) correlated with, and ideally derived from, the concurrency
relation. To put it another way,
temporal stochastic processes and models capturing concurrency
through nondeterminism such as
\cite{kartson1994modelling,de1998stochastic,rabin1963probabilistic,segala1996modeling,derman1970finite}
have a global state corresponding to a global time. In a distributed
system, however, this is neither feasible nor natural. Thus, in true
concurrency approaches there is no notion of global time or state, but
rather local ones. In other words, the local components have their own
local states and act in their own local time until they communicate
together. This results in a highly desirable match between concurrency and
probability, so that concurrent
choices can be made probabilistically independent.

There is some work, quite recently, on probabilistic true
concurrency models
\cite{volzer2001randomized,haar2002probabilistic,varacca2003probabilistic,varacca2004probabilistic,benveniste2003markov,AB06,abbes2008true},
but this is still at the foundational semantic stage, and there are
very few no true concurrent probabilistic temporal logics -- a recent
example in the setting of distributed Markov chains is in \cite{Jha15}.

Therefore, in \cite{thesis} the first author aimed to develop a
probabilistic temporal logic suitable for application to Petri nets,
which are one of the best known true concurrent models, even though
often used with a forced interleaving semantics. As part of this
programme, it was necessary first to extend lower-level models, which
is the work reported here.

\subsection{Related work}

Event structures \cite{W86}, and extensions thereof, are our
basic model of concurrency. While they are a very `low
level' model, unsuited for direct modelling of systems, they model
directly concurrency and causality -- causality being (roughly) the converse of concurrency.

To our knowledge, the first probabilistic model for event structures
 was given in \cite{katoen1996quantitative}, which
 defines probabilistic extended bundle event
structures. As with the stable event structures we discuss later, 
these allow different possible causes for an event, only one of
which can be the cause in a run. However, they capture choice by means
of
groups of events which are mutually in conflict and enabled at the
same time, called \emph{clusters}. Clusters can be determined
statically, and maintain the concept of choice internal to the system.

A domain theoretic view, closely related to the probabilistic
powerdomains of \cite{jones1989probabilistic,tix2009semantic}, was
taken by Varacca, Volzer and Winskel \cite{varacca2004probabilistic},
who define continuous valuations on the domain of configurations. They
then define \emph{non-leaking valuations with independence} on
confusion-free event structures. These turn out to coincide with
distributed probabilities as defined by \cite{AB06}.

Randomised Petri nets along with their corresponding probabilistic
branching processes defined in \cite{volzer2001randomized}  focus on
free-choice conflicts only and therefore do not deal with confusion. 

Finally, the most recent probabilistic event structures defined by Winskel \cite{winskel2013distributed} extend existing notions of probabilistic event structures in order to make them suitable for dealing with certain interactions between strategies.

The main hurdle in defining a probabilistic concurrent
system with partial order semantics is to find units of
choice so that concurrent units are probabilistically
independent. Abbes and Benveniste in \cite{AB06} define distributed
probabilities taking \emph{branching cells} as units of choice and
show that  in probabilistic event structures not only does concurrency
match probabilistic independence, but also that this cannot be
achieved at a grain finer than that of branching cells. Furthermore,
they show how finite configurations can be decomposed into branching
cells dynamically, where maximal configurations of the branching
cells enforce all the conflicts within the cell to be
resolved. \emph{Local  probabilities} are assigned to each branching
cell and these can be extended to a \emph{limiting
probability} measure on the space of maximal configurations. The only
constraint required is that  of \emph{local finiteness} which can be
viewed as bounded confusion and which is defined later.

\subsection{Summary}

In this article, we first explain the motivation for studying
probabilistic stable event structures. Then, after the necessary
preliminaries, we define \emph{conflict-driven} stable event 
structures, a superclass of event structures allowing enough
`confusion' to model Petri nets (the ultimate aim of this work),
but constrained enough to allow a manageable probabilistic semantics.
To extent the concepts of \cite{AB06} to this setting,
we need to
constrain them to a certain sub-class of \emph{jump-free} (stable)
event structures. We then show that probabilistic jump-free stable event structures can be defined analogously to probabilistic event structures. 

\section{Motivation}

As we said above, our original aim was probabilistic logic for Petri
nets, and this work forms part of the route to there. So before diving
in to the unfortunately but inevitably highly technical development of
probabilistic stable event structures, let us explain the key issues in
the  more widely understood formalism of Petri nets.

Consider the following two safe nets:

\begin{figure}[h]
\centering
\def\m{label=center:$\bullet$}
\begin{tikzpicture}
[place/.style={inner sep=1.6mm,circle,thick,draw=black},
mplace/.style={inner sep=1.6mm,circle,thick,draw=black,label=center:$\bullet$},
transition/.style={inner sep=2mm,rectangle,thick,draw=black}]

\node (p1) at (1,1) [mplace] {};
\node (p2) at (3,1) [mplace] {};
\node (p3) at (5,1) [mplace] {};

\node (t1) at (0,2) [transition,label=left:$t_1$] {};
\node (t2) at (2,2) [transition,label=left:$t_2$] {};
\node (t3) at (4,2) [transition,label=left:$t_3$] {};
\node (t4) at (6,2) [transition,label=left:$t_4$] {};

\node (p5) at (1,3) [place] {};
\node (p6) at (4,3) [place] {};

\node (t5) at (1,4) [transition,label=left:$t_a$] {};
\node (t6) at (4,4) [transition,label=left:$t_b$] {};

\draw [->] (p1) to (t1);
\draw [->] (p1) to (t2);
\draw [->] (p2) to (t2);
\draw [->] (p2) to (t3);
\draw [->] (p3) to (t3);
\draw [->] (p3) to (t4);

\draw [->] (t1) to (p5);
\draw [->] (t2) to (p5);
\draw [->] (p5) to (t5);
\draw [->] (t3) to (p6);
\draw [->] (p6) to (t6);

\node (s1) at (8,3) [transition,label=$u_1$] {};
\node (s2) at (10,3) [transition,label=$u_2$] {};
\node (s3) at (9,3) [transition,label=$v$] {};

\node (q1) at (8,2) [mplace] {};
\node (q2) at (10,2) [mplace] {};

\draw [->,bend right=20] (q1) to (s1);
\draw [<-,bend left=20] (q1) to (s1);
\draw [->,bend right=20] (q2) to (s2);
\draw [<-,bend left=20] (q2) to (s2);

\draw [->] (q1) to (s3);
\draw [->] (q2) to (s3);
\end{tikzpicture}
\end{figure}
The left net demonstrates several instances of confusion, where the
occurrence of one event changes the possibilities elsewhere in the
net. Now suppose that we add probabilities. There are several ways to
do this, but they will all boil down to assigning probabilities to the
choices between say $t_1$ or $t_2$, and $t_2$ or $t_3$, and so on.
However, because of confusion, the various choices are not
independent. For example, whether a choice arises between $t_a$ and
$t_b$ depends on whether $t_b$ was enabled by $t_1$ or $t_2$. Worse,
an event arbitrarily far away may be the source of such confusion: the
firing of $t_4$ disables $t_3$, thereby stopping it from conflicting
with $t_2$. Worse still, this confusion then propagates to affect the
choice between $t_a$ and $t_b$, which superficially have nothing to do
with each other. In
the absence of independence, it is hard to see how to give a useful
probabilistic semantics that reflects concurrency. Thus, it is equally
hard to extend probabilistic temporal logics (making statements such
as `transition $t$ fires with probability ${}>0.5$ in all futures') to
nets.

The right net is perhaps even more pathological: $u_1$ and $u_2$ can
fire away independently, but the malign presence of $v$ waiting to
disable them both at some time means that they cannot be treated as
truly independent.

This problem, which more generally we might characterize as `the
problem of negative causality', has been troubling the community for
decades, and we have not solved it. Instead we looked for ways to
analyse differing levels of confusion, so that at least we could
define manageable classes of nets. In the course of this, the first
author \cite{thesis} first developed a new notion of `compact
unfolding' of Petri net, which allows some degree of backward conflict
to be maintained without being unfolded out. It turns out that just as
standard net unfoldings give rise to an event structure semantics
for Petri nets, compact unfoldings give rise to a stable event
structure semantics. Hence we needed to extend the existing work on
probabilistic semantics from event structures to stable event structures.

\section{Preliminaries}

Owing to the lengthy formal definitions needed for the work we build
on, we shall present most of the preliminaries in summary form,
referring to \cite{thesis} or standard texts for formalities.

\subsection{Event structures and their variants}

The event structures of Winskel \cite{W86} model concurrency
by describing \emph{consistency} and \emph{enabling} relations between
\emph{events}. They are a low-level model, as each event corresponds
to an occurrence of a transition or action in higher-level models such
as Petri nets or process calculi. Thus,
they are suited to foundational investigations in concurrency theory.

For historical reasons, the terminology is a little
confusing, and has also changed during the development of the
subject. Here we shall use \emph{general event structures} for the
original and most general formulation; \emph{stable event structures}
impose the 
restriction that an event has a unique set of causing events; and
plain \emph{event structures} are those generated by
binary \emph{causality} and \emph{conflict} relations on events, rather
than more general predicates. Owing to our very frequent use of these
long terms, we will abbreviate them.

\begin{definition}\label{ees}\label{ees:conf}
A \emph{general event structure (\abbrees/)} $\es$ is a set of
\emph{events} $E$ equipped with a non-empty subset-closed
\emph{consistency predicate} $Con \subseteq \wp_{\mathrm{fin}}(E)$ on
finite sets of events, and an \emph{enabling relation} ${\vdash}
\subseteq Con \times E$, monotone in the first argument.
A \emph{configuration} of $\mathcal{E}$ is a (possibly infinite)
subset $x\subseteq E$ that is finitely consistent, and such that every
$e\in x$ has a finite enabling set in $x$.
The set of all configurations of $\mathcal{E}$ is represented by $\confs(\es)$ or $\confs_\es$ or $\confs$ when no confusion arises.
\end{definition}

\begin{definition}
Let $(P,\sqsubseteq)$ be a partial order. Then $S\subseteq P$ is
\emph{compatible} (${S\uparrow}$) iff $\exists p \in P.\;\forall s\in
S.\;s\sqsubseteq p$. A subset is finitely compatible (${S\uparrow}^{\textit{fin}}$) iff $\forall S_0 \subseteq_{\textit{fin}}S.\;{S_0\uparrow}$.
\end{definition}

The configurations $\mathcal{V}=\confs(\es)$ of $\es$ form a family of
subsets of $E$ that is finite-complete ($\mathcal{A} \subseteq \mathcal{V}
\;\&\; {\mathcal{A}\uparrow} ^{\textit{fin}} \Rightarrow \bigcup \mathcal{A}\in \mathcal{V}$), finitely based ($\forall u\in \mathcal{V}.\; \forall e\in
u.\; \exists v\in \mathcal{V}.\;(v \text{ is finite } \&\; e \in v \;\&\;
v\subseteq  u)$), and coincidence-free ($\forall u\in \mathcal{V}.\;
\forall e,e' \in u.\; e\neq e' \Rightarrow (\exists v\in \mathcal{V}.\;
v\subseteq u \;\&\;(e\in v \Leftrightarrow e' \notin v))$).
It can be shown that given a family $\mathcal{V} \subseteq 2^E$ that is
finite-complete, finitely based and coincidence-free, $\mathcal{V}$
is the set of configurations of the \abbrees/
$\mathcal{E}=(E,Con,\vdash)$ given by $x \in Con
\Leftrightarrow_{\textit{def}} x \text{ is finite }\&\; \exists u \in
\mathcal{V}.\; x\subseteq u$ and $x \vdash e \Leftrightarrow_{\textit{def}} x \in Con \;\&\; \exists u \in \mathcal{V}.\; e\in u \;\&\; u \subseteq x\cup\{e\}$.

General event structures are one of the most general classes of
event structures; they allow for an event to have different causes,
which is a desirable property. However, problems arise when dealing
with configurations in which an event does not have a unique cause, in
that its different causes can occur at the same time. Such
configurations can be excluded by applying a \emph{stability}
constraint, leading to the definition of stable event structures. It
is worth noting that for stable event structures there is no global
partial order of causal dependency on events, but each configuration
has its own local partial order of causal dependency. 

\begin{definition}\label{ses}
A \emph{stable event structure (\abbrses/)}, is an \abbrees/ such that
mutually consistent enabling sets are closed under intersection, so
that any set enabling $e$ has a unique minimal enabling subset, the
`causes' of $e$.

Non-empty compatible configurations of an \abbrses/ are closed under
intersection, and such a \emph{stable family} of configurations induces an \abbrses/.

Given a stable family $\mathcal{V}$ of configurations, and
$e,e'\in u \in \mathcal{V}$ define 
$e \leq_u e'$ if for all $v:\,\mathcal{V} \ni v\subseteq u$ we have $e'\in
v$ implies $e\in v$ (i.e.\ $e$ is a necessary member of the
$u$-history of $e'$).
Let $\lceil e\rceil_u$ be the least configuration $v \subseteq u$
containing $e$ (the `causal history' of $e$ in $u$).
\end{definition}

An alternative approach to stability is to define $\leq$
in the structure as a global partial order of causal dependency on
events: 

\begin{definition}\label{pes}
  A \emph{prime event structure (\abbrpes/)} $\mathcal{E}$ on $E$
  comprises $Con$ as before, and a partial order $\leq$ on $E$ (the
  \emph{causality relation}) such that the down-closure $\lceil
  e\rceil$ of $e$ under $\leq$ is finite, singleton events are
  consistent, $Con$ is closed under subsets and under adding causes of
  events.
  The set of configurations of a \abbrpes/ is the set of
  $\leq$-down-closed consistent configurations, and is a stable family
  of configurations for $\mathcal{E}$ \cite{W86}.
\end{definition}

Thus a \abbrpes/ is itself a \abbrses/ with the
naturally induced enabling relation. Less obviously, 
a \abbrses/ can be translated into a \abbrpes/
by extending and renaming the events, so that each event is augmented
with the history of how it occurred in a configuration. Then, even
though the events and as such the configurations are different, the
domains of configurations of both event structures are isomorphic as
partial orders \cite{W86}. This translation is indeed right adjoint to
the embedding of \abbrpes/ in \abbrses/, in the categorical
framework that Winskel sets up.

Finally, generating the consistency relation from a
binary conflict relation gives

\begin{definition}\label{es}
An \emph{event structure (\abbres/)} $\mathcal{E}$ on $E$ 
comprises a causality relation $\leq$ as above, and a binary,
symmetric and irreflexive \emph{conflict relation} $\#$ on $E$ which is
is $\leq$-upwards-closed (i.e.\ events inherit the conflicts
of their causes).


A configuration of $\mathcal{E}$ is a $\leq$-downwards-closed and
conflict-free subset of $E$. Write $\lceil e\rceil$ as above, and
$\lceil x\rceil$ for $\bigcup_{e\in x}\lceil e\rceil$.

\end{definition}

\subsection{Probabilistic Event Structures}
\label{sec:prob-es}

This subsection is a (severe) summarization of the concepts introduced
by \cite{AB06}. The abstract definition is conceptually fairly simple.

A \emph{probability measure} on a space $\Omega$ is a
function from a suitable collection (technically a $\sigma$-algebra)
of subsets of $\Omega$ to $[0,1]$, satisfying the
appropriate behaviour for probabilities (the measure of a union of
disjoint sets is the sum of the measures).

Now consider an \abbres/, and let $\Omega$ be the space of its maximal
configurations (finite or infinite, perhaps uncountably infinite)
where the system has run to completion. There is a $\sigma$-algebra on
$\Omega$ comprising the sets $\{ \omega \in \Omega : \omega \supseteq
v\}$ for each configuration $v$ -- that is, each set is a full subtree
of the configuration tree rooted at $v$. Call this set $S(v)$ (the
\emph{shadow} of $v$). The intuition is that the probability of $v$ is
equal to the sum of the probabilities of all the maximal
configurations $\omega$ reachable from $v$ -- except that not all such
configurations are probabilistically independent, so the usual
probability calculations for non-independent events have to be made.

\begin{definition}\label{defn:prob-es}
A \emph{probabilistic event structure (\abbrpres/)} is a pair $(\es, \prob)$, where
$\prob$ is a \emph{probability measure} on the space of maximal
configurations of $\es$ with the $\sigma$-algebra of shadows.

The \emph{likelihood function} on configurations of $\es$ is the
function $p(v) = \prob(S(v))$.
\end{definition}

This definition is very abstract, and hard to calculate with or
implement via a process calculus. Much of the work of \cite{AB06} is
in giving an alternative operational presentation, which we now
introduce via a long series of notions. In the rest of this section,
$\es = (E,{\leq},\#)$ is an \abbres/, $e, e'$ etc.\ range over $E$,
and $u, v$ etc.\ range over $\mathcal{V}(\es)$.

\begin{definition}
 $P \subseteq
 E$ is a \emph{prefix} if $P= \lceil P \rceil$ (closed under causes).
If $F \subseteq E$, then $(F, {\leq} {\rest} F, \# {\rest} (F \times F))$ is an \abbres/ (which is a sub-\abbres/ of $\es$).
We use $F$ to also refer to the sub-\abbres/ induced by $F$, when no confusion arises. 

Hence, configurations are conflict-free prefixes.
$v$ is \emph{maximal} iff it is a $\subseteq$-maximal
  configuration.
We denote the maximal configurations of $\es$ by $\Omega(\es)$ or just
$\Omega$.

The \emph{future} $\es^{v}= (E^{v},\leq^{v},\#^{v})$ of 
$v$, where $E^v = \{e \notin v : \lceil e\rceil \cup v \in \mathcal{V}\}$, is the \abbres/ generated by events that can happen
after $v$. Given $u \in \mathcal{V}$ and $v \in \mathcal{V}^u$,
the \emph{concatenation} $u \oplus v$ is just $u \cup v$; and given
configurations $v\subseteq u$, the  \emph{subtraction} $u
\ominus v$ is just $u \setminus v$.
\end{definition}

In a probabilistic event structure, the probability reflects the
notion of choice which arises whenever  a conflict is encountered for
the first time. Thus, we consider the `first-hand' conflicts,
i.e.~conflicts which are not inherited, as constituents of units of
choice. Then we consider those prefixes which contain all immediate
conflicts, and maximal configurations within them, which resolve all
choices that can be made. The final step is to close under
concatenation to give a notion of `R-stopped configuration', which
provides a probabilistically independent unit.

\begin{definition}
Events $e$ and $e'$ are in \emph{immediate conflict} $e \#_\mu e'$ iff
$ \# \cap (\lceil e \rceil \!\times\! \lceil e'\rceil) = \{(e,e')\}$
(i.e.\ their conflict is not from conflicting causes).

A prefix $B$ of $\es$ is a \emph{stopping prefix} if it is
$\#_\mu\text{-closed}$.
Given a set $X \subseteq E$, $X^*$ is the closure of $X$ under
$\#_\mu$ and $\lceil\hbox{--}\rceil$ -- it is the \emph{minimal
  stopping prefix} containing $X$. 

$v$ is  \emph{\bs} if $v$ is  maximal
in $\mathcal{V}(B)$; and $v$ is
\emph{stopped} if it is $v^*$-stopped.

$v$ is \textit{R-stopped} if there is a sequence 
$\emptyset \subseteq v_1 \subseteq\dots \subseteq  v_{n} \dots$
such that $v = \bigcup v_n$ and each $v_{n+1}\ominus v_n$ is finite and stopped in $\es^{v_n}$.

The sequence is called a \textit{valid decomposition} of $v$ and if
$v = v_N$ for some $N$, then $v$ is \textit{finite R-stopped}.
The set of $R$-stopped configurations is written ${\mathcal{W}(\mathcal{E})}$. 
\end{definition}

Thus R-stopped configurations are the end of a sequence of steps,
each of which resolves all possible choices before it. It remains to
decompose these steps into their concurrent (i.e.\ independent) components.

\begin{definition}\label{Pre-regular Event Structures}
$\es$ is \textit{pre-regular}, if for every finite
$v$, there are finitely many events
enabled by $v$.

$\mathcal{E}$ is \textit{locally finite} if every $e \in E$ is in some
finite stopping prefix.
\end{definition}

  \cite{AB06} theorem 3.12 shows that then all
maximal configurations are \rs.
Henceforth all \abbres/s are locally finite.

\begin{definition}\label{initSP}
If $\mathcal{X} \subset \mathcal{V}$, write $\overline{\mathcal{X}}$
for the subset of its finite configurations.

An initial stopping prefix is a minimal non-empty stopping prefix.

\label{def:Branching-Cell}\label{def:bc-enabled}
A \emph{branching cell}
\emph{enabled by} a finite \rs\  $v$ is an initial stopping prefix
of $\es^{v}$. The set of such branching cells is notated $\delta_{\es}(v)$.

It can be shown that $v$, with valid decomposition $(v_n)$, has a
\emph{covering} $\Delta_{\es}(v)$, a 
sequence $(c_n)$ of branching cells such that $v_n$ enables $c_{n+1}$
and $v_{n+1}\setminus v_n$ is maximal in $c_{n+1}$. That is, $(c_n)$
gives a sequence of maximal independent steps to reach $v$.

The set of all branching cells of $\es$ is denoted by
$\mathcal{C}(\es)$ and the set of maximal configurations of a
branching cell $c$ is denoted by $\Omega_c$.  

\end{definition}

It is important to note that while branching cells decomposing a configuration are disjoint, in general, different branching cells of an \abbres/ may overlap. This is because not all of the events that can potentially constitute a branching cell are enabled at every configuration. Thus, configurations determine their corresponding branching cells and in this way we say that decomposition through branching cells is dynamic. 

Now the operational definition of \abbrpres/'s is 
given by equipping each branching cell of an \abbres/ with a
probability for its maximal configurations; \cite{AB06} show
(highly non-trivially) that the operational and abstract definitions give the
same class of structures.

\begin{definition}
A locally finite \abbres/ $\mathcal{E}$ is \emph{locally
randomised} if each branching cell $c \in \mathcal{C}$ is equipped
with a \emph{local transition probability} $q_c$ on the set $\Omega_C$
of configurations that can be chosen within the cell.
\label{Probabilistic Event Structure}
The likelihood function $p:\overline{\mathcal{W}}\rightarrow [0,1]$ is defined as:
$$\forall v\in \overline{\mathcal{W}}, \text{    }p(v)= \prod_{c\in {\Delta}(v)} q_{c}(v \cap c)$$
where $\Delta(v)$ denotes the covering of $v$ in $\mathcal{E}$.

$p$ induces a probability measure
$\prob_B$ on the (countable) space $B$ of stopping prefixes of $\es$;
and the full probability measure $\prob$ can be derived from $\prob_B$
via a construction called the \emph{distributed product} and the
Prokhorov extension theorem. Hence the measure $\prob$ is called the
distributed product of the branching probabilities $\{ q_c : c \in C \}$.

\end{definition}

Further, this definition of likelihood
function also applies to the space of $R$-stopped
configurations of $\es$, and so \cite{AB06} obtains a definition of probabilistic event
structures in which local choice probabilities are attached either to
branching cells or to $R$-stopped configurations.

\subsection{Categorical notions}
It is convenient to have to hand a few items from Winskel's
categorical toolkit:

\begin{definition}\cite{W86}
Let $\es_0=(E_0, Con_0, \vdash_0)$ and $\es_1=(E_1, Con_1, \vdash_1)$ be two \abbrses/s. A \emph{morphism} from $\es_0$ to $\es_1$ is a partial function $\theta: E_0 \rightarrow E_1$ on events satisfying:
\begin{enumerate}
\item $X \in Con_0 \Rightarrow \theta.X \in Con_1$
\item $\{e,e'\} \in Con_0 \;\&\; \theta(e)=\theta(e') \Rightarrow e=e'$
\item $X \vdash_0 e \;\&\; \theta(e) \text{ is defined }\Rightarrow \theta.X \vdash_1\theta(e)$
\end{enumerate}
A morphism is \emph{synchronous} if it is a total function.
\end{definition}
As remarked earlier, there is an inclusion $I$ from \abbrpes/s to \abbrses/s, with an adjoint functor
back. We will need this functor:

\begin{definition}\label{def:trans-ses-pes-}
Given a \abbrses/ $\es_0=(E, Con,\vdash)$, let $\Theta(\es_0)$ be the \abbrpes/ $\es_1=(P, Con_P,\leq)$, with isomorphic domain of configurations, defined as follows. 
\begin{enumerate}
\item $P=\{\lceil e \rceil_x ~|~ e\in x \in \confs(\es) \}.$
\item $p' \leq p \Leftrightarrow p' \subseteq p.$
\item $X \in Con_P \Leftrightarrow X\subseteq_{\text{fin}} P ~\&~ X\uparrow .$
\end{enumerate}
$\Theta$ maps morphisms thus: if $\theta\colon \es \to \es'$ (induced by
an event map $\theta:E \to E'$), then $\Theta(\theta)$ is $\{\lceil e \rceil_x ~|~ e\in x \in
\confs(\es) \} \mapsto \{\lceil \theta(e) \rceil_{x'} ~|~ \theta(e)\in
x' \in \confs(\es') \}$.

For a \abbrses/ $\es_0$ we refer to $\Theta(\es_0)$ as its associated \abbrpes/. 

The counit of the adjunction is $\theta\colon \es_0 \mapsto \theta_{\es_0}$,
where $\theta_{\es_0}\colon \Theta(\es_0)\to\es_0$ is the synchronous morphism (of \abbrses/) given by
$\theta_{\es_0}(p)=e \text { for }p= \lceil e \rceil_x \in P,\; e \in
E \;\&\; x \in \confs(\es_0)$.
\end{definition}

\section{Conflict-driven (Stable or Prime) Event Structures}\label{sec:conflict-driven}

The difficulty in extending probabilistic notions to rich concurrent
structures such as \abbrses/s lies in the consistency
relation, and ensuring that causal (in)dependence matches
appropriately with probabilistic (in)dependence. As we have just seen,
it is already quite technically intricate for the relatively simple
case of event structures with a binary conflict relation. Our
contribution here is to develop the framework further to allow more
sophisticated consistency relations. In particular, we will define a
class of \abbrses/s which is sufficient to give a
low-level semantics for Petri nets, in which consistency arises not
just out of immediate conflicts, but out of the history of previous
conflicts; and moreover there is the possibility of \emph{confusion},
where the execution of \emph{prima facie} concurrent events is interfered with
by other events.

In the definition of \abbrses/s, the consistency predicate $Con$ is required to satisfy only one condition. Namely, $Y \subseteq X \;\&\;X \in Con \Rightarrow Y \in Con$. This does not necessarily have to fit with the configurations of the \abbrses/s. Consider the following example. 

\begin{example}
Let $\es=(E,Con,\vdash)$ be a \abbrses/, where
$E=\{e_1,e_2,e_3,e_4\}$, $\{e_1\} \vdash e_2, \{e_3\}\vdash e_4$,
$\{e_1,e_3\} \notin Con$ and $\{e_2,e_4\} \in Con$. It is easy to see
that even though $e_2$ and $e_4$ are consistent, they can never appear
together in a configuration. Therefore, the consistency predicate is
not sensible with respect to the configurations. (In terms introduced
just below, the consistency relation here is only recording the
immediate conflicts, not the inherited conflicts.)
\end{example}

We therefore define \emph{sensible} \abbrses/s  and \abbrpes/s as follows. 

\begin{definition}\label{def:conflict-driven-ses}
Let $\es$ be a \abbrses/ or \abbrpes/ with the consistency relation
$Con$. We say $\es$ is \emph{sensible} iff $\forall X \in Con
\Leftrightarrow \exists v \in \confs(\es).\; X \subseteq v$. If $\es$
is not sensible, it can be made so by pruning $Con$ of the unreachable
consistent sets.
\end{definition}

So far we have described how the consistency predicate in sensible
structures relates to the configurations of that structure. We
now define the notions of conflict and immediate conflict.

\begin{definition}
Two events $e$ and $e'$ of a \abbrses/ are in \emph{conflict} under a finite configuration $v$, represented by $e \#_v e'$ iff $\big(\{e\} \cup \{e'\} \cup v \big)\notin Con$. Then two events are in conflict, represented by $e \# e'$, iff $\forall v \in \mathcal{V}.\;e \#_v e'$. 


\label{def:imm-conf--}
Define \emph{immediate conflict} 
between two events $e$ and $e'$ of a \abbrses/ under configuration $v$: 
$$e \#_{\mu ,v} e' \Leftrightarrow_{def} v \vdash e\;\&\;  v \vdash e' \;\&\; \#_{v} \cap (\lceil e \rceil_v \times \lceil e' \rceil_v)  = \{(e,e')\}.$$
and define $e \#_\mu e'$ iff $\forall v,v' \in \mathcal{V}.\; v \vdash
e \;\&\; v' \vdash e' \Rightarrow \exists v'' \subseteq v \cup v'.\; e
\#_{\mu ,v''} e'$. 
\end{definition}

\begin{definition}
For a set $X \subseteq_{\text{fin}} E$, let ${}_*X$ be the set of the
sets consisting of exactly one history $\lceil e \rceil_v$ for each
event $e$ in $X$ and configuration $v \ni e$. (That is, for every
configuration, ${}_*X$ contains a single choice among all the
histories that can have produced each event in $X$.)
\end{definition}

We now define a special class of \abbrses/s, namely,
conflict-driven \abbrses/s. Originally, their definition
arose through considering unfoldings of Petri nets, and so in
\cite{thesis} they are called `net-driven', but here we abstract away
from the net derivation.

\begin{definition}
A \abbrses/ $\es=(E, Con, \vdash)$ is called \emph{conflict-driven} iff it satisfies the following.
\begin{enumerate} 
\item $\es$ is sensible.
\item 
$\forall X \subseteq_{fin} E.\; X \notin Con \Rightarrow \forall T\in{}_*X.\;  \exists e_1, e_2 \in   \bigcup T.\; e_1 \#_{\mu} e_2$
\item $\forall e, e' \in E, v \in \confs.\; e \#_{\mu,v} e' \Rightarrow e \# e'$
\end{enumerate}
Note that from 2 and 3 it follows that $\forall X \subseteq E.\; X \notin Con \Rightarrow \forall T \in {}_*X.\; \exists e_1, e_2 \in  \bigcup T.\; e_1 \# e_2$
\end{definition}
As mentioned before, the first characteristic describes that the
consistency predicate is in line with configurations and the second
one implies that the source of inconsistency is a conflict in the
past. The last constraint describes the persistence of conflicts 
(originally because immediate conflict in a Petri net is a cause of
later conflict).

We can now show that for the associated \abbrpes/ of a
conflict-driven \abbrses/, the consistency predicate
can be generated from a binary conflict relation:

\begin{theorem}\label{thm:ses2pes}
Let $\es_0=(E,Con,\vdash)$ be a conflict-driven \abbrses/ and let $\es_1=\Theta(\es_0)=(P,Con_P,\le)$ be its associated \abbrpes/. Then,
 we have:
$$X \in Con_P \Leftrightarrow X \subseteq_{\text{fin}} P \;\&\; \forall p,p' \in X. \neg (p \# p')$$
where $p\#p'$ iff $\{p,p'\}\notin Con_P$.
\end{theorem}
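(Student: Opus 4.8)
The plan is to first translate the statement about $Con_P$ into a statement about $Con$ of $\es_0$, and then reduce the theorem to a single substantive implication. Each element $p\in P$ is a finite prime configuration $\lceil e\rceil_x$ of $\es_0$ (finite because configurations are finitely based), hence a finite subset of $E$. Using finite-completeness of the configuration family of $\es_0$ together with sensibility, I would record that for finite $X\subseteq P$ the condition $X\uparrow$ is equivalent to $\bigcup X\in Con$, so that $X\in Con_P \Leftrightarrow X\subseteq_{\text{fin}} P \;\&\; \bigcup X\in Con$. Since $p\#p'$ unwinds to $p\cup p'\notin Con$, the theorem becomes: for finite $X\subseteq P$, $\ \bigcup X\in Con \Leftrightarrow \forall p,p'\in X.\,p\cup p'\in Con$. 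The forward direction is immediate: $Con$ is subset-closed, so if $\bigcup X\in Con$ then $p\cup p'\subseteq\bigcup X$ gives $p\cup p'\in Con$ for all $p,p'\in X$.

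The backward direction is the heart of the matter, and is where conflict-drivenness enters. I would assume $X=\{p_1,\dots,p_n\}\subseteq_{\text{fin}} P$ is pairwise non-conflicting, so $p_i\cup p_j\in Con$ for all $i,j$, and argue by contradiction that $U:=\bigcup X\notin Con$ is impossible. Writing $p_i=\lceil e_i\rceil_{x_i}$, the crucial feature is that each $p_i$ is itself a configuration of $\es_0$. By the consequence of conditions 2 and 3 noted in the definition of conflict-driven \abbrses/, $U\notin Con$ yields that for \emph{every} $T\in{}_*U$ there exist $f_1,f_2\in\bigcup T$ with $f_1\#f_2$. I expect the main obstacle to be the single delicate choice of $T$: for each event $e\in U$ I would pick an index $i(e)$ with $e\in p_{i(e)}$ and take the history $\lceil e\rceil_{p_{i(e)}}$, which is legitimate since $p_{i(e)}\in\confs(\es_0)$ and $e\in p_{i(e)}$. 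Then $\bigcup T=U$, because each chosen history lies inside $p_{i(e)}\subseteq U$ yet contains $e$; this confines the forced conflicting pair to $U$ rather than to events dragged in from unrelated histories.

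With this $T$ the conflicting pair satisfies $f_1,f_2\in U=\bigcup_k p_k$, so $f_1\in p_i$ and $f_2\in p_j$ for some $i,j$, whence $\{f_1,f_2\}\subseteq p_i\cup p_j$. Pairwise non-conflict gives $p_i\cup p_j\in Con$, and subset-closure of $Con$ then yields $\{f_1,f_2\}\in Con$. On the other hand $f_1\#f_2$ means $\{f_1\}\cup\{f_2\}\cup v\notin Con$ for every configuration $v$, and taking $v=\emptyset$ (which is a configuration) gives $\{f_1,f_2\}\notin Con$ --- a contradiction. Hence $U\in Con$, i.e.\ $X\in Con_P$, completing the argument. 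Beyond the choice of $T$, the only point I would take care to pin down is the intended reading of $X\uparrow$ in the definition of $Con_P$, namely boundedness within the configuration family of $\es_0$, so that the reduction $X\in Con_P\Leftrightarrow\bigcup X\in Con$ is exactly correct.
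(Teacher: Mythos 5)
Your proof is correct and follows essentially the same route as the paper's: reduce membership in $Con_P$ to $\bigcup X\in Con$ via sensibility, then invoke the consequence of clauses 2 and 3 of conflict-drivenness to extract a conflicting pair $f_1\# f_2$ and contradict pairwise compatibility of the primes. If anything, your explicit choice of $T$ (taking each history inside some $p_{i(e)}$, so that $\bigcup T=\bigcup X$) is more careful than the paper's own argument, which quantifies over all $T$ and then tacitly assumes the conflicting pair lands inside members of $X$.
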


If a \abbrpes/ does have a consistency relation generable
from a binary conflict relation, it can be seen as an \abbres/ via an inclusion mapping $\hat I$.
%
%
Thus a conflict-driven \abbrses/ generates an
\abbres/. 
\begin{definition}\label{def:hatted}
Given a conflict-driven \abbrses/ $\es$, we denote by $\hat\es$ or
$\Thetahat(\es)$  the \abbres/ ${\hat I}(\Theta(\es))$, and we refer
to $\hat\es$ as the associated \abbres/ of $\es$. Similarly we write
$\tilde\theta_\es$ for the adjunct morphism $\hat I(\Theta(\es)) \to
\Theta(\es) \to \es$.
\end{definition}

\section{Probabilistic Jump-free Stable Event Structures}
We now consider adjoining probabilities to \abbrses/s. First we
present the definition of concepts analogous to those of probabilistic
event structures.  Then our aim is to derive an isomorphism between the events of branching cells of conflict-driven \abbrses/s and 
their associated \abbres/; we find that such isomorphisms exist if the \abbrses/s are \emph{jump-free}, as we shall define.

{\fontseries{b}\selectfont We assume that the {\rm\abbrses/s} in this section are conflict-driven unless stated otherwise.}

\subsection{Branching Cells on Stable Event Structures}
We now define branching cells for \abbrses/s, in a
similar manner and show that in general, unlike the branching cells of
\abbres/s, they do not form the units of choice.

\begin{definition}\label{Prefix}\label{Configuration}
A subset $P \subseteq E$ is called a \textit{prefix} of a \abbrses/ $\es$ iff $ \forall e\in P.\;\exists X \subseteq P.\; X \vdash e$.

Let $\es = (E,Con,\vdash)$ be a \abbrses/ and let $F$ be a prefix of
$E$. Then $(F, \{ X \cap F : X \in Con\}, \{ (X\cap F,e\in F) :
X\vdash e \})$ is a \abbrses/ (which is a sub-\abbrses/ of $\es$). We use
$F$ also for the sub-\abbrses/ induced by $F$, when no confusion arises.
Configurations can then be viewed as consistent prefixes as before.
Concepts of compatibility of configuration and maximal
configurations are defined as for \abbres/s and 
we represent the set of maximal configurations of \abbrses/
$\mathcal{E}$ by $\Omega(\es)$.

In this setting, 
the \emph{future} of a configuration $v$ of $\es$ is $\es^{v}=(E^{v},Con^v,\vdash^v)$ where
$E^{v}=\{ e \notin v : \{e\}\cup v \in Con \}$
and $Con^{v}, {\vdash^{v}}$ are the natural restrictions to
$E^v$. (Note that the future of $v$ includes all events that might
happen, both those directly enabled by $v$ and those completely independent.)
\end{definition}

The notions of $B$-stopped and stopped configurations for \abbrses/s are similar to those of \abbres/s. However,
unlike \abbres/s, given $X$ a subset of events, a canonical
stopping prefix including $X$ cannot be derived. This is because in
the definition of prefix for \abbrses/s an event can have
different sets of events enabling it. Thus, these  notions are defined
as follows, recalling the definition of $\#_{\mu, v}$ (definition \ref{def:imm-conf--}).

\begin{definition}\label{Stopping Prefix}\label{Stopped Configuration}
A prefix $B$ of a \abbrses/ $\es$ is called a \textit{stopping prefix}
if it is $\#_ {\mu,v}\text{-closed}$ in the following sense: 
$$\forall v \subseteq B.\; e \in v \;\&\; \exists e' \in E.\; e \#_{\mu,v}e' \Rightarrow e' \in B$$

A configuration $v$ of $\es$ is called \textit{$B$-stopped} if $v$ is a maximal configuration of $B$; $v$ is called \textit{stopped} if there is a stopping prefix $B$ such that $v$ is $B$-stopped.
\end{definition}

Stopping prefixes of a conflict-driven \abbrses/s have
a close relation with the stopping prefixes of their associated 
\abbres/. To expand this further, we first observe the relation
between $\#_{\mu,v}$ of a \abbrses/ and $\#_\mu$ of its
corresponding \abbres/ by the following immediate lemma:

\begin{lemma}\label{lem:imm-conf-2}
Given a  \abbrses/ $\es$ and $\hat\es = \Thetahat(\es)$, then $e \#_{\mu,v} e' \Leftrightarrow \lceil e \rceil_v \#_\mu \lceil e' \rceil_v$
\end{lemma}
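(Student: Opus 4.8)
The plan is to collapse both immediate-conflict conditions to a single inconsistency statement in $\es$ and then match those. Throughout I use that $v$ is finite, that $\lceil e\rceil_v$ means the least configuration containing $e$ inside $v\cup\{e\}$ (so $\lceil e\rceil_v\setminus\{e\}\subseteq v$), and that $v\vdash e$ makes $v\cup\{e\}$ a configuration.

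First I would establish two ingredients. The structural one is the standard stable-family fact that for configurations $e\in u\subseteq u'$ one has $\lceil e\rceil_u=\lceil e\rceil_{u'}$ (because $\lceil e\rceil_{u'}$ is the intersection of all configurations below $u'$ containing $e$, and $u$ is one of them --- here closure of compatible configurations under intersection is used). This identifies the $\le$-down-closure of the event $\lceil e\rceil_v$ in $\hat\es$ with $\{\lceil a\rceil_{\lceil e\rceil_v}\mid a\in\lceil e\rceil_v\}$, via the bijection $a\mapsto\lceil a\rceil_{\lceil e\rceil_v}$ sending $e$ to the top event $\lceil e\rceil_v$. The conflict ingredient is: for events $p,p'$ of $\hat\es$ (each a configuration of $\es$), $p\# p'$ holds in $\hat\es$ iff $p\cup p'\notin Con$ in $\es$. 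This follows from Theorem~\ref{thm:ses2pes}, which makes $\#$ the relation $\{p,p'\}\notin Con_P$, together with sensibility (Definition~\ref{def:conflict-driven-ses}) and the structural fact: $\{p,p'\}\in Con_P$ means $p,p'$ sit in a common configuration, and such a configuration realises them as genuine histories exactly when $p\cup p'\in Con$.

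Next I would reduce each condition by a short case analysis. For $a\in\lceil e\rceil_v$, $b\in\lceil e'\rceil_v$ with $(a,b)\neq(e,e')$, each of $a\neq e$ and $b\neq e'$ forces that event into $v$, so $\{a,b\}\cup v$ is one of $v$, $v\cup\{e\}$, $v\cup\{e'\}$, all in $Con$; hence $\neg(a\#_v b)$ automatically, and $e\#_{\mu,v}e'$ (Definition~\ref{def:imm-conf--}) reduces to $v\vdash e\ \&\ v\vdash e'\ \&\ \{e,e'\}\cup v\notin Con$. Transporting the same case analysis through the bijection and the conflict ingredient, every off-diagonal union $\lceil a\rceil_{\lceil e\rceil_v}\cup\lceil b\rceil_{\lceil e'\rceil_v}$ is a subset of $v$, $v\cup\{e\}$ or $v\cup\{e'\}$, hence in $Con$; so $\lceil e\rceil_v\#_\mu\lceil e'\rceil_v$ reduces to its top pair conflicting, i.e.\ to $\lceil e\rceil_v\cup\lceil e'\rceil_v\notin Con$.

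What remains is the equivalence $\{e,e'\}\cup v\notin Con\Leftrightarrow\lceil e\rceil_v\cup\lceil e'\rceil_v\notin Con$ under $v\vdash e,v\vdash e'$. The backward implication is immediate from $\lceil e\rceil_v\cup\lceil e'\rceil_v\subseteq\{e,e'\}\cup v$ and downward closure of $Con$. The forward implication is the heart of the proof and the place conflict-drivenness is indispensable: assuming $\{e,e'\}\cup v\notin Con$, I would apply the consequence of conditions~2 and~3 in the definition of conflict-driven \abbrses/s to $X=\{e,e'\}\cup v$, using the history choice taking $\lceil g\rceil_v$ for $g\in v$ and $\lceil e\rceil_v,\lceil e'\rceil_v$ for $e,e'$; since this choice unions to exactly $X$, it produces a globally conflicting pair $e_1\# e_2$ inside $X$. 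As $v$, $v\cup\{e\}$ and $v\cup\{e'\}$ are all configurations, no conflicting pair can lie within any of them, which forces $\{e_1,e_2\}=\{e,e'\}$ and hence $e\# e'$; sensibility then gives $\lceil e\rceil_v\cup\lceil e'\rceil_v\notin Con$. Chaining the two reductions with this equivalence yields the lemma. The main obstacle is precisely this localisation --- excluding that the inconsistency of $\{e,e'\}\cup v$ is manufactured by some event of $v$ lying outside the two histories --- and condition~2 of conflict-drivenness is exactly what rules it out.
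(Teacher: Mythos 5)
Your proof is correct and follows the same route as the paper's own (the appendix restatement, Lemma~\ref{lem:sens-}, is dispatched as ``follows trivially'' from the $Con$/$Con_P$ correspondence of Lemma~\ref{lem:sens} together with the definitions of $\#_{\mu,v}$ and $\#_\mu$): you reduce both immediate-conflict conditions to inconsistency statements in $\es$ via the history bijection and match them, which is exactly the intended argument carried out in full. The one point worth retaining from your write-up is that you correctly isolate the step that is not actually trivial --- the localisation $\{e,e'\}\cup v\notin Con\Rightarrow\lceil e\rceil_v\cup\lceil e'\rceil_v\notin Con$ genuinely needs condition~2 of conflict-drivenness (sensibility alone does not exclude an inconsistency witnessed by events of $v$ outside the two histories), a dependence the paper's one-line proof conceals but which is consistent with the section's standing assumption that all \abbrses/s there are conflict-driven.
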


Using this lemma we can describe the relation between \abbrses/ stopping
prefixes and \abbres/ stopping prefixes as follows. 

\begin{proposition}
Given a \abbrses/ $\es$, then  $B$ is a stopping prefix of $\es$ iff $\Thetahat(B)$ is a stopping prefix of $\hat\es$. 
\end{proposition}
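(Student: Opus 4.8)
The plan is to unfold both occurrences of \emph{stopping prefix} and bridge the two closure conditions with Lemma~\ref{lem:imm-conf-2}, which is precisely the device that turns an \abbrses/ immediate conflict $e \#_{\mu,v} e'$ into an \abbres/ immediate conflict $\lceil e \rceil_v \#_\mu \lceil e' \rceil_v$ between the corresponding history-events of $\hat\es$. Each side of the biconditional has two parts, being a prefix and being closed under the relevant immediate conflict, so I would first settle the prefix part and then concentrate on the closure part, where the content lies. For the prefix part, recall that the events of $\hat\es=\Thetahat(\es)$ are the histories $\lceil e \rceil_x$ (for $e\in x\in\confs(\es)$) ordered by inclusion, and that the events of $\Thetahat(B)$ are exactly those histories lying inside the sub-\abbrses/ $B$. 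Since every such history is itself a configuration and a sub-configuration of a configuration of $B$ is again one, I would record the clean characterisation $p\in\Thetahat(B)\Leftrightarrow p\subseteq B$; downward closure of $\Thetahat(B)$ under $\le$ is then immediate, so $\Thetahat(B)$ is a prefix of $\hat\es$ exactly when $B$ is a prefix of $\es$. I would also note the compatibility point that $\lceil e\rceil_v$ is computed identically whether $v$ is read in $B$ or in $\es$, which licenses moving freely between the two structures.

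For the forward direction, assume $B$ is a stopping prefix of $\es$, take an event $p=\lceil e\rceil_v$ of $\Thetahat(B)$ and an event $q=\lceil e'\rceil_{v'}$ of $\hat\es$ with $p\#_\mu q$, and aim to show $q\in\Thetahat(B)$. Lemma~\ref{lem:imm-conf-2} translates this into an immediate conflict $e\#_{\mu,w}e'$ in $\es$ witnessed at a configuration $w\subseteq B$ containing $e$; since $B$ is $\#_{\mu,v}$-closed this forces $e'\in B$, and then $q=\lceil e'\rceil_{v'}\subseteq B$, i.e.\ $q\in\Thetahat(B)$ by the characterisation above. The backward direction is the mirror image: assuming $\Thetahat(B)$ is a stopping prefix of $\hat\es$, take $v\subseteq B$, $e\in v$ and $e'$ with $e\#_{\mu,v}e'$; Lemma~\ref{lem:imm-conf-2} yields $\lceil e\rceil_v\#_\mu\lceil e'\rceil_v$, the event $\lceil e\rceil_v$ lies in $\Thetahat(B)$ because $\lceil e\rceil_v\subseteq v\subseteq B$, and $\#_\mu$-closure of $\Thetahat(B)$ then delivers $\lceil e'\rceil_v\in\Thetahat(B)$, whence $e'\in\lceil e'\rceil_v\subseteq B$.

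The main obstacle I anticipate is not conceptual but the bookkeeping of the history parameters. In the forward direction the $\hat\es$-events $p$ and $q$ are named by histories $\lceil e\rceil_v$ and $\lceil e'\rceil_{v'}$ taken relative to \emph{possibly different} configurations, whereas Lemma~\ref{lem:imm-conf-2} is phrased with a single shared $v$. The work is therefore to show that an immediate conflict $p\#_\mu q$ in $\hat\es$ can always be presented in the matched form $\lceil e\rceil_w\#_\mu\lceil e'\rceil_w$ demanded by the lemma -- intuitively, that the conflict is generated at the top events $e,e'$ of the two histories and is witnessed at a common configuration $w$. Here I expect to lean on the conflict-driven axioms, in particular persistence $e\#_{\mu,v}e'\Rightarrow e\#e'$ together with sensibility, to produce such a common witnessing configuration. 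Once this alignment is secured, each direction reduces to a single application of Lemma~\ref{lem:imm-conf-2} and the defining closure property.
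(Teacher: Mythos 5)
Your proposal is correct and follows essentially the same route as the paper's own (very terse) proof: split the claim into the prefix correspondence and the immediate-conflict-closure correspondence, and discharge the latter with Lemma~\ref{lem:imm-conf-2}. You are in fact more careful than the paper, which silently elides the point you flag about aligning the two history parameters $v,v'$ into the single shared configuration required by the lemma.
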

\begin{proof}
It is easy to verify that  $B$ is a prefix iff $\hat B=\Thetahat(B)$
is a prefix as well (follows from the definition of $\Thetahat$ and
$\tilde\theta_\es$ (definitions \ref{def:trans-ses-pes-}, \ref{def:hatted}) being a morphism).
Also, from lemma \ref{lem:imm-conf-2} it follows that $B$ is
$\#_{\mu,v}$-closed (for $v \subseteq B$) iff $\hat B$ is $\#_\mu$-closed.
\end{proof}

The following are the appropriate adaptations of the \abbres/ definitions:

\begin{definition}\label{R-stopped-Configurations}
A configuration $v$ of \abbrses/ $\es$ is \textit{R-stopped} if there is a non-decreasing sequence of configurations $(v_n)$ for $0\leq n<N\le\infty$ such that:
\begin{enumerate}
\item $v_{0}=\emptyset$ and $v=\bigcup_{0{\leq}n<N}v_n$ , and
\item $\forall n{\geq}0, \text{    }n+1<N \Rightarrow v_{n+1} \ominus v_n$ is finite stopped in $\es^{v_n}$.
\end{enumerate}
The sequence is called a \textit{valid decomposition} of $v$ and if
$N<\infty$ then $v$ is said to be \textit{finite R-stopped}. The set
of $R$-stopped configurations of an \abbrses/ $\es$ is denoted ${\mathcal{W}(\es)}$.

\label{Pre-regular Event Structures-}
$\mathcal{E}$ is \textit{pre-regular}, if for every finite configuration $v$ of $\mathcal{E}$, the set $\{e \in E \mid v \oplus \{e\}\}$ is finite.

\label{Locally Finite Event Structures-}
$\mathcal{E}$ is \textit{locally finite} if for every $e \in E$, there is a finite stopping prefix of $\mathcal{E}$ containing $e$. 

As before, an \emph{initial stopping prefix} is a minimal non-empty stopping prefix.

\label{Branching Cell-}
A \emph{branching cell} of $\es$ and configuration $v \in
\overline{\mathcal{W}}({\es})$ is an {initial stopping prefix} of
$\es^{v}$. The set of all branching cells of $\es$ is denoted by
$\mathcal{C}(\es)$ and the set of maximal configurations of a branching cell $c$ is denoted by $\Omega_c$. 

The branching cells which are initial stopping prefixes of $\es^{v}$
are called the branching cells \emph{enabled} by $v$, and
denoted by $\delta_{\es}(v)$ or $\delta(v)$ if no confusion
arises. 
\end{definition}

\subsection{Probabilistic Event Structures and Stable Event Structures}
We are now ready to add probability to \abbrses/s. This
would be fairly straightforward, if the
branching cells of conflict-driven \abbrses/s and their
associated \abbres/s were isomorphic. However, 
the following example shows why this is not the case.

\begin{example}\label{ex:non-jump-free}
Consider the below \abbrses/ $\es$, corresponding to the Petri net at the
beginning of the paper, and its associated \abbres/ $\hat\es$, where
the dashed curved lines represent immediate conflicts.
The dotted
curved line shows immediate conflict under a particular configuration;
in this case $e_a \#_{\mu,\{e_1\}}e_b$.

\begin{figure}[h]
\hbox{\vbox{\hsize=0.5\textwidth
\centering 
\begin{tikzpicture}
[line width=1pt,event/.style={rectangle,fill=black,minimum width=3mm,
minimum height=0.5mm, inner sep=0pt}]
\node at (1,1) [event,label=right:$e_1$] (e1) {};
\node at (2,1) [event,label=right:$e_2$] (e2) {};
\node at (3,1) [event,label=right:$e_3$] (e3) {};
\node at (4,1) [event,label=right:$e_4$] (e4) {};
\node at (1.5,2) [event,label=right:$e_a$] (e6) {};
\node at (3,2) [event,label=right:$e_b$] (e7) {};
\draw [->] (e1) -- (e6); \draw [->] (e2) -- (e6); \draw [->] (e3) -- (e7);
\draw[dotted] (e6) to [bend left=40] (e7);
\draw[dashed] (e1) to [bend right=40] (e2);
\draw[dashed] (e2) to [bend right=40] (e3);
\draw[dashed] (e3) to [bend right=40] (e4);
\draw[dashed,line width=1.5pt,draw=lightgray] (0.3,0.3) -- (5,0.3) -- (5,2.7) --
(0.3,2.7) -- cycle;
\node at (1,2.7) [above] {$\tilde\theta_\es(c_1)$};
\end{tikzpicture}

{A (conflict-driven) stable event structure $\es$}
}\vbox{\hsize=0.5\textwidth
\centering 
\begin{tikzpicture}
[line width=1pt,event/.style={rectangle,fill=black,minimum width=3mm,
minimum height=0.5mm, inner sep=0pt}]

\draw[dashed,line width=1.5pt,draw=lightgray] (0.3,0.3) -- (0.3,2.7) --
(1.4,2.7) -- (1.4,1.4) -- (2.6,1.4) -- (2.6,2.7) -- (5,2.7) --
(5,2.7) -- (5,0.3) -- cycle;
\draw[dashed,line width=1.5pt,draw=lightgray] (1.6,2.7) -- (2.4,2.7) --
(2.4,1.6) -- (1.6,1.6) -- cycle;

\node at (1,1) [event,label=right:$e_1$] (e1) {};
\node at (2,1) [event,label=right:$e_2$] (e2) {};
\node at (3,1) [event,label=right:$e_3$] (e3) {};
\node at (4,1) [event,label=right:$e_4$] (e4) {};
\node at (1,2) [event,label=right:$e_a$] (e6) {};
\node at (2,2) [event,label=right:$e_a'$] (e6') {};
\node at (3,2) [event,label=right:$e_b$] (e7) {};
\draw [->] (e1) -- (e6); \draw [->] (e3) -- (e7);
\draw [->] (e2) -- (e6');
\draw[dashed] (e6) to [bend left=40] (e7);
\draw[dashed] (e1) to [bend right=40] (e2);
\draw[dashed] (e2) to [bend right=40] (e3);
\draw[dashed] (e3) to [bend right=40] (e4);

\node at (1,2.7) [above] {$c_1$};
\node at (2,2.7) [above] {$c_2$};
\end{tikzpicture}

{The corresponding \abbres/ $\hat\es=\Thetahat(\es)$.}
}}\end{figure}
\noindent As it can be seen from the figures, $\hat\es$ has two branching
cells ($c_1$ and $c_2$), while $\es$ (which ${} =
\theta_\es(\hat\es)$) has only one. To see 
this, let us construct the branching cells of
$\hat\es$. Consider the configuration $v_1=\{e_2,e_4,...\}$. Having event
$e_2$ implies that events $e_1,e_3,e_4,e_b$ and $e_a$ must be
added to the branching cell because of $\#_\mu$-closure. Thus,
$e_a'$ is not in this branching cell, but in the next branching cell,
consisting of $e_a'$ only. However, in the \abbrses/, both $e_a$ and
$e_a'$ of $\hat\es$ are represented by event $e_a$ of $\es$. Therefore, it
is not possible to cover $\es$ in any manner that is consistent with
the covering for $\hat\es$.  

Although in this example $c_2$ does not reflect any choice being made, more complex examples exist where all branching cells have a choice to make. Therefore, it is not possible to resolve this at the probabilistic level, e.g.~by trying to combine a number of branching cells with respect to their probabilities. 
\end{example}

In order to achieve isomorphic branching cells and avoid the above
situation (and more complex problems), we consider a class of
\abbrses/s (and their associated \abbres/s) which forbid these cases, namely, those structures which are \emph{jump-free}.


\begin{definition}\label{def:jump-free-es}
An \abbres/ is \emph{jump-free} iff $$\forall e,e'.\; e < e' \Rightarrow \nexists e_1,\ldots,e_k.  k>1 \mathrel{\&}e \#_\mu e_1, e_i \#_\mu e_{i+1}  \mathrel{\&} e_k \#_\mu e'$$ for  $1 \leq i \leq {k}$.
\label{def:jump-free-ses}
A \abbrses/ is \emph{jump-free} iff $$\forall e,e'. e <_v e' \Rightarrow \nexists e_1,\ldots,e_k.  k>1 \mathrel{\&} e \#_{\mu,v_0} e_1, e_i \#_{\mu,v_{i}} e_{i+1}, e_k \#_{\mu,v_{k}}e'$$ for $1 \leq i \leq {k-1} \mathrel{\&} v_i\subseteq v $.
\end{definition}

For example, 
the \abbrses/ in example \ref{ex:non-jump-free} is not jump-free as either of the chains of events $e_3,e_4,e_b$  and $e_1,\ldots,e_b$ break  jump-freeness. 

Jump-free \abbres/s are simpler to deal with, as, unlike \abbres/s, they are \emph{flat} in the following sense.

\begin{proposition}\label{prop:jfes}
The branching cells of jump-free \abbres/s (as initial stopping
prefixes) consist of initial events only (and the converse holds also).
\end{proposition}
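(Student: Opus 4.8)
The plan is to prove both implications, reducing throughout to the statement that in a jump-free \abbres/ every \emph{initial stopping prefix} consists of $\le$-minimal (i.e.\ initial) events. Since a branching cell is by definition an initial stopping prefix of a future $\es^{v}$, and since one checks routinely that the future $\es^{v}$ of a jump-free \abbres/ is again jump-free (the $\es^{v}$-causal histories are restrictions of those in $\es$, so an immediate-conflict chain in $\es^{v}$ lifts to one in $\es$), it suffices to treat initial stopping prefixes of $\es$ itself, with ``initial'' meaning $\le$-minimal.

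For the forward implication, let $B$ be an initial stopping prefix of a jump-free \abbres/ $\es$. First I observe that $B=\{e_0\}^{*}$ for any $\le$-minimal $e_0\in B$: such an $e_0$ exists by well-foundedness of $\le$, the singleton $\{e_0\}$ is a prefix, hence $\{e_0\}^{*}\subseteq B$, and minimality of $B$ forces equality. Suppose for contradiction that $B$ contains a non-initial event, and pick one, $f$, that is $\le$-maximal among the non-initial events of $B$; let $c<f$ be a $\le$-minimal cause (so $c\in B$ and $B=\{c\}^{*}$). The goal is to produce an immediate-conflict chain $c\ \#_\mu e_1\ \#_\mu\cdots\ \#_\mu e_k\ \#_\mu f$ and invoke jump-freeness for $c<f$. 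The logic is clean because \emph{any} such chain automatically has $k>1$: a chain with $k=0$ is impossible, as $c\# f$ cannot hold for the consistent causally-related pair $c<f$; and $k=1$ is impossible, since $e_1\ \#_\mu f$ forces $e_1\not\# c'$ for every proper cause $c'$ of $f$ (in particular $e_1\not\# c$), whereas $c\ \#_\mu e_1$ demands $c\# e_1$. Thus it suffices to \emph{exhibit} one immediate-conflict chain from $c$ to $f$. Because $f$ is $\le$-maximal non-initial, it is a proper cause of nothing in $B$, so it was not pulled in by causal closure; hence $B\setminus\{f\}$ fails to be $\#_\mu$-closed, giving some $g\in B$ with $g\ \#_\mu f$ as the final link.

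For the converse I argue contrapositively: if $\es$ is not jump-free, I exhibit a branching cell containing a non-initial event. Fix a witnessing pair $e<e'$ with a chain $e\ \#_\mu e_1\ \#_\mu\cdots\ \#_\mu e_k\ \#_\mu e'$, $k>1$. Passing to the future $\es^{v}$ with $v=\lceil e\rceil\setminus\{e\}$ makes $e$ initial while the link $e<e'$ survives, so $e'$ is non-initial in $\es^{v}$. Since every stopping prefix of $\es^{v}$ containing $e$ must, by $\#_\mu$-closure along the chain, contain $e_1,\dots,e_k$ and finally $e'$, the initial stopping prefix covering $e$ carries the non-initial event $e'$, witnessing non-flatness; the example \ref{ex:non-jump-free} is exactly a structure of this shape.

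The main obstacle is the chain-extraction step in the forward direction. The closure $\{c\}^{*}$ is built by alternating immediate-conflict steps with causal-closure steps, so the reachability witness from $c$ to $f$ (and in particular the connection of the link $g\ \#_\mu f$ back to $c$) may descend to a cause of some already-added event before jumping by $\#_\mu$; such a path is \emph{not} literally an immediate-conflict chain. The crux is therefore a normalisation lemma: one must show, using the $\le$-upward closure of $\#$ and the definition of $\#_\mu$, that a causal-closure step can never bridge two distinct $\#_\mu$-components without the whole configuration collapsing into a genuine immediate-conflict chain between $c$ and $f$. Equivalently, the real content is that a \emph{minimal} stopping prefix containing a non-initial event necessarily carries a forbidden chain, and controlling the interaction between causal closure and $\#_\mu$-closure --- together with ensuring in the converse that the witnessing non-initial event lands in a genuinely minimal stopping prefix --- is where the work lies.
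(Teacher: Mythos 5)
Your overall strategy is the same as the paper's: pick a causally maximal non-initial event $f$ of the cell, note that the cell equals $\{c\}^{*}$ for a minimal cause $c<f$, and try to turn membership of $f$ in that closure into a forbidden $\#_\mu$-chain from $c$ to $f$. Your handling of the side conditions is in fact more careful than the published argument: the observation that any $\#_\mu$-chain from $c$ to $f$ automatically has $k>1$ (since $c\#f$ would contradict upward closure and irreflexivity, and $k=1$ would force $c\#e_1$ while $e_1\#_\mu f$ forbids $e_1\#c$) is correct and is only gestured at in the paper.

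The problem is the step you flag and then do not carry out: extracting a \emph{pure} $\#_\mu$-chain from $c$ to $f$ out of the closure $\{c\}^{*}$, which interleaves $\#_\mu$-steps with causal-history steps. This is a genuine gap, not a routine normalisation: with Definition~\ref{def:jump-free-es} read literally, the lemma you need is false. Take $E=\{a,a',b,b'\}$ with $a<a'$, $b<b'$, and conflict generated by $a\#b'$ and $b\#a'$ (so $a'\#b'$ is inherited, hence not immediate). The only immediate conflicts are $a\#_\mu b'$ and $b\#_\mu a'$, so the $\#_\mu$-graph splits into the components $\{a,b'\}$ and $\{b,a'\}$; no $\#_\mu$-chain joins $a$ to $a'$ or $b$ to $b'$, and the structure is vacuously jump-free. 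Yet the only non-empty stopping prefix is $E$ itself, so the unique branching cell contains the non-initial events $a'$ and $b'$. Hence no argument can produce the chain you want unless the forbidden ``jump'' is allowed to descend through causes as well as cross $\#_\mu$-edges (the mixed chain $a\;\#_\mu\;b'>b\;\#_\mu\;a'$ does exist here). The paper's own proof asserts the existence of the pure chain at exactly this point without justification, so you have located the real difficulty precisely; but naming it a ``normalisation lemma'' does not discharge it. Your converse has analogous unfinished business: $v=\lceil e\rceil\setminus\{e\}$ must be shown $R$-stopped before the initial stopping prefixes of $\es^{v}$ count as branching cells, and the minimal stopping prefix of $\es^{v}$ containing $e$ need not be a \emph{minimal non-empty} stopping prefix, so the witnessing non-initial event has not actually been placed in a branching cell.
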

\begin{proof}
Suppose $c$ has non-initial events and let $e \in c$ be such that  $\exists e_0 \in c.\; e_0<e \;\&\; \nexists e_1 \in c.\; e <e_1$. Then there exists an initial event $e'$ s.t. $\nexists e_0 \in c.\; e_0 < e' \;\&\; e' < e$. Noting that $c$ is an initial stopping  prefix and therefore, $\nexists c'. \;c' \subset c$, then in the formation of $\{e'\}^*$, $e$ can only be added to achieve the closure of $\#_\mu$. Therefore, there must be a chain of events $e_1, \ldots, e_k$ s.t. $e' \#_\mu e_1, e_i \#_\mu e_{i+1} \;\&\; e_k \#_\mu e$. Observe that  $k>1$ as otherwise $\neg (e \#_\mu e_1)$. Therefore, above chain forms a jump which is a contradiction and as such $c$ only consists of non-initial events. 

The converse follows immediately from branching cells being closed under $\#_\mu$.
\end{proof}

The same holds for \abbrses/s:
\begin{proposition}
The branching cells of jump-free \abbrses/s (as initial stopping prefixes) consist of initial events only.
\end{proposition}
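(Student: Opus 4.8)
The plan is to reduce the claim to the already-established \abbres/ case, Proposition \ref{prop:jfes}, by passing to the associated \abbres/ $\hat\es=\Thetahat(\es)$ and exploiting the correspondence between stopping prefixes developed above. Rather than trying to rerun the proof of Proposition \ref{prop:jfes} directly inside $\es$ --- which is awkward, since, as noted, a \abbrses/ has no canonical minimal stopping prefix $\{e'\}^*$ containing a given event --- I would transport the whole situation into $\hat\es$, where that closure construction is available, settle the statement there, and transport the conclusion back along the counit $\tilde\theta_\es$.

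First I would check that jump-freeness is preserved by $\Thetahat$, i.e.\ that if $\es$ is jump-free then so is $\hat\es$. Here Lemma \ref{lem:imm-conf-2} does the essential work, since it identifies $e \#_{\mu,v} e'$ with $\lceil e\rceil_v \#_\mu \lceil e'\rceil_v$, so that any $\#_\mu$-chain in $\hat\es$ pulls back to a $\#_{\mu,v}$-chain in $\es$. Combined with the fact that causality in $\hat\es=\Theta(\es)$ is exactly $\subseteq$-inclusion of histories (Definition \ref{def:trans-ses-pes-}), a chain witnessing a jump in $\hat\es$ yields a jump in $\es$, contradicting jump-freeness of $\es$. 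Next I would invoke the stopping-prefix correspondence (the proposition following Lemma \ref{lem:imm-conf-2}) together with its evident monotonicity in $\subseteq$ to conclude that $\Thetahat$ restricts to an order isomorphism between the (initial) stopping prefixes of a future $\es^{v}$ and those of the corresponding future of $\hat\es$; in particular a branching cell $c$ of $\es$, being an initial stopping prefix of $\es^{v}$, is carried to a branching cell of $\hat\es$.

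Applying Proposition \ref{prop:jfes} inside $\hat\es$ then shows that the image branching cell consists of initial events only, i.e.\ of minimal histories. Translating back through $\tilde\theta_\es$, a minimal history $\lceil e\rceil_v$ is one with $\lceil e\rceil_v=\{e\}$ in the relevant future, which is precisely the condition that the corresponding event $e$ of $c$ be initial (enabled by $\emptyset$ in $\es^{v}$). Hence every event of $c$ is initial, as required.

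I expect the main obstacle to be the bookkeeping in the second step: one must verify that $\Thetahat$ genuinely commutes with the future operation, so that the branching cells of $\es$ (initial stopping prefixes of $\es^{v}$) match up with branching cells of $\hat\es$ (initial stopping prefixes of $\hat\es^{\hat v}$) rather than merely with stopping prefixes of $\hat\es$ itself. Concretely, this amounts to showing that the histories computed within the future $\es^{v}$ correspond, under $\Thetahat$, to the histories of $\hat\es$ after restriction to the image of $v$; granting this, minimality of the prefix (and hence initiality) is preserved because the correspondence is a $\subseteq$-order isomorphism.
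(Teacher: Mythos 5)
Your overall strategy --- transporting the problem along $\Thetahat$ to the associated \abbres/ and invoking Proposition \ref{prop:jfes} there --- is genuinely different from the paper's proof, which simply reruns the argument of Proposition \ref{prop:jfes} directly inside $\es^v$, observing that for the initial events of $\es^v$ the relation $\#_{\mu,v'}$ no longer depends on the configuration (it coincides with $\#_\mu$), so the chain-of-immediate-conflicts argument goes through essentially verbatim. The first step of your reduction, jump-freeness of $\hat\es$ via Lemma \ref{lem:imm-conf-2}, is fine and is in fact a lemma the paper proves separately in the appendix.

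The gap is in your second step. The stopping-prefix correspondence only says that $B$ is a stopping prefix of $\es$ iff $\Thetahat(B)$ is a stopping prefix of $\hat\es$; this is an order embedding of stopping prefixes, not an order isomorphism, because it says nothing about stopping prefixes of the associated \abbres/ that are not of the form $\Thetahat(B)$. Consequently ``minimality is preserved'' does not follow from monotonicity: a minimal stopping prefix $c$ of $\es^v$ could be sent to a non-minimal stopping prefix, namely one properly containing a smaller stopping prefix that lies outside the image of $\Thetahat$. This is exactly what happens in Example \ref{ex:non-jump-free}, where the unique branching cell of $\es$ maps onto a stopping prefix of $\hat\es$ strictly containing the branching cell $c_1$. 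Ruling this out under jump-freeness is the real content of the step, and it is essentially Theorem \ref{thm:bc-iso-bc-} --- whose own proof in the paper relies on the present proposition, so assuming that isomorphism here would be circular. Your reduction can be repaired without it: take any minimal non-empty stopping prefix $d \subseteq \Thetahat(c)$ of the associated \abbres/; by Proposition \ref{prop:jfes} it consists of initial events, i.e.\ singleton histories; using Lemma \ref{lem:imm-conf-2} check that $\tilde\theta_\es(d)$ is then a non-empty stopping prefix of $\es^v$ contained in $c$; and invoke minimality of $c$ to conclude $c = \tilde\theta_\es(d)$, which consists of initial events. But that argument needs to be supplied; as written, the key step is asserted rather than proved.
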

\begin{proof}
The proof follows a similar reasoning to that of \abbres/s, noting that for all the initial events in $\es^v$, $\#_\mu$ is resolved meaning $\forall v' \in \confs(\es^v).\;e\#_{\mu, v'}e' \Rightarrow e \#_\mu e'$. 
\end{proof}

Recalling that the configurations of a \abbrses/ and its associated \abbres/ are isomorphic, we show in the following theorem that the branching cells of a jump-free conflict-driven \abbrses/ and its associated \abbres/ are isomorphic. 

\begin{theorem}\label{thm:bc-iso-bc-}
Given a jump-free conflict-driven \abbrses/ $\es$ and its associated
\abbres/ $\hat\es=\Thetahat(\es)$, $\hat{\mathcal{C}}$, the set of
branching cells of $\hat\es$, is isomorphic to $\mathcal{C}$, the
set of branching cells of $\es$. 
\end{theorem}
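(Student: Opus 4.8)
The plan is to realise the isomorphism concretely through the counit morphism $\tilde\theta_\es\colon \hat\es \to \es$, which on events is the total map $\lceil e\rceil_x \mapsto e$. I would define $\Phi\colon \hat{\mathcal{C}} \to \mathcal{C}$ by $\hat c \mapsto \tilde\theta_\es(\hat c)$ and prove that this is a well-defined bijection under which corresponding cells are isomorphic as event structures, with matching spaces of maximal configurations $\Omega_{\hat c}\cong\Omega_{\tilde\theta_\es(\hat c)}$ (the last point being what the later probabilistic development needs). The whole argument rests on two facts already available: the configurations of $\es$ and $\hat\es$ are isomorphic via $\tilde\theta_\es$, so finite $R$-stopped configurations and their futures match up; and, by jump-freeness, the branching cells on both sides consist of initial events only (Proposition~\ref{prop:jfes} and its \abbrses/ counterpart).

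First I would transport the configuration isomorphism to futures. A finite $R$-stopped $\hat v \in \overline{\mathcal{W}}(\hat\es)$ corresponds to $v = \tilde\theta_\es(\hat v) \in \overline{\mathcal{W}}(\es)$, and I would check that taking futures commutes with $\Thetahat$, i.e.\ $\hat\es^{\hat v} \cong \Thetahat(\es^v)$; this is a local computation from the definitions of future and of $\Thetahat$. It then suffices to fix one such pair and show that the initial stopping prefixes of $\hat\es^{\hat v}$ and of $\es^v$ are in structure-preserving bijection.

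The heart of the matter is the behaviour of $\tilde\theta_\es$ on the initial events of a future. I would show it restricts to a bijection between the initial events of $\hat\es^{\hat v}$ and those of $\es^v$. Injectivity is where stability enters: an event $\lceil e\rceil_x$ that is initial in $\hat\es^{\hat v}$ has all its proper causes inside $\hat v$, so its history is forced to be the part of $v$ that minimally enables $e$, which is \emph{unique} because $\es$ is stable; hence two initial events of the same future with common image $e$ coincide. Surjectivity is the unique lift: each $e$ initial in $\es^v$ (i.e.\ directly enabled by $v$) has the unique history $\lceil e\rceil_{v\cup\{e\}}$ as preimage. Since $\tilde\theta_\es$ intertwines $\#_\mu$ with $\#_{\mu,v}$ by Lemma~\ref{lem:imm-conf-2}, this bijection respects immediate conflict and consistency, so it is an isomorphism of the flat conflict structures carried on initial events. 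Because jump-freeness confines each branching cell to these initial events, and a branching cell is exactly the $\#_\mu$-closure of a single initial event, the initial stopping prefixes correspond one-to-one and $\hat c\cong\tilde\theta_\es(\hat c)$ follows, together with the correspondence of maximal configurations.

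Finally I would assemble the global bijection $\hat{\mathcal C}\to\mathcal C$. Well-definedness is immediate since $\tilde\theta_\es(\hat c)$ depends only on $\hat c$; injectivity and surjectivity follow from the per-future bijection together with the future-commutation, phrasing the cell correspondence intrinsically (as $\#_\mu$-closure of a single initial event) so as to cope with the facts that one cell may be enabled by several configurations and that distinct cells may overlap. I expect the main obstacle to be precisely this bookkeeping at the interface of the two descriptions: verifying that the image of an initial stopping prefix is again \emph{minimal}, not merely a stopping prefix, and that the lift of a branching cell of $\es$ is genuinely $\#_\mu$-closed in $\hat\es^{\hat v}$. This is delicate because the earlier correspondence of stopping prefixes is stated for full images $\Thetahat(B)$, whereas a branching cell of $\hat\es$ need not be of that form (cf.\ Example~\ref{ex:non-jump-free}) until jump-freeness collapses it to the unique-history lift.
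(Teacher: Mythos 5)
Your proposal is correct and follows essentially the same route as the paper's own proof: realise the correspondence via the counit $\tilde\theta_\es$, show it restricts to a bijection on the initial events of matching futures (your injectivity argument invokes stability directly, where the paper routes it through Lemma~\ref{lem:morph-conf}, but the underlying fact is the same), transfer immediate conflict via Lemma~\ref{lem:imm-conf-2}, and use jump-freeness to confine branching cells to initial events. If anything you are more explicit than the paper about the bookkeeping (futures commuting with $\Thetahat$, minimality of images), which the paper's proof leaves implicit.
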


The most important consequence of
theorem \ref{thm:bc-iso-bc-} is that the covering of any configuration
in a \abbrses/ is exactly the same as that of its corresponding
configuration in its associated \abbres/. That is because the
configurations in the future of two isomorphic configurations are also
isomorphic and therefore, two isomorphic configurations have
isomorphic coverings. Therefore, all the probabilistic properties of
branching cells of \abbres/s are applicable to those of \abbrses/s,
and as such, all the probabilistic machinery described in section
\ref{sec:prob-es} for \abbres/s can be applied to conflict-driven
jump-free \abbrses/s.  
Thus, for example, the likelihood function for a \abbrses/ $\es$, $p:\overline{\mathcal{W}}\rightarrow \mathbb{R}$ is defined as:$$\forall v\in \overline{\mathcal{W}}, \text{    }p(v)= \prod_{c\in {\Delta}(v)} q_{c}(v \cap c)$$

\section{Conclusion}
We have introduced a new class of \abbrses/ called
conflict-driven \abbrses/s, which includes those \abbrses/ that arise
from Petri nets under the first author's `compact unfoldings'.

We then proceeded to extend the results of \cite{AB06} to \abbrses/,
finding that this is not possible in general, but that it is possible
for our new class of `jump-free' \abbres/ and \abbrses/ for which the
branching cells consist of events which 
are not causally related. We then proved that  for such stable event
structures and their associated  event structures, the branching cells
are isomorphic. Thus, probabilistic jump-free \abbrses/s
were defined in a similar manner to probabilistic event structures of
\cite{AB06}. 

The jump-free notion can be translated back to Petri nets, where it
means, more or less, that confusion is allowed provided that it is not
directly propagated forward by the causality relation. While the
jump-free structures are a larger class than the free-choice
structures, they are still far from complete: both the example nets in
our motivation section have jumps. We hope in future work to find
weakenings of the jump-free constraint.

\subsection*{Acknowledgements}

The first author was supported by a studentship from the Laboratory
for Foundations of Computer Science, and by a University of Edinburgh
Overseas Research Scholarship. Both authors were partly supported by
EPSRC grant EP/G012962/1. We are grateful to readers of \cite{thesis}
and earlier versions of this paper for very helpful comments.

\bibliographystyle{plain}
\bibliography{refs}

\begin{thebibliography}{10}

\bibitem{AB06}
Samy Abbes and Albert Benveniste.
\newblock True-concurrency probabilistic models branching cells and distributed
  probabilities for event structures.
\newblock {\em Inf. Comput.}, 204(2):231--274, 2006.

\bibitem{abbes2008true}
Samy Abbes and Albert Benveniste.
\newblock True-concurrency probabilistic models: Markov nets and a law of large
  numbers.
\newblock {\em Theoretical Computer Science}, 390(2):129--170, 2008.

\bibitem{baier1998algorithmic}
Christel Baier.
\newblock On algorithmic verification methods for probabilistic systems.
\newblock {\em Universit{\"a}t Mannheim}, 1998.

\bibitem{benveniste2003markov}
Albert Benveniste, Eric Fabre, and Stefan Haar.
\newblock Markov nets: probabilistic models for distributed and concurrent
  systems.
\newblock {\em Automatic Control, IEEE Transactions on}, 48(11):1936--1950,
  2003.

\bibitem{de1997formal}
Luca De~Alfaro.
\newblock {\em Formal verification of probabilistic systems}.
\newblock PhD thesis, Standford University, 1997.

\bibitem{de1998stochastic}
Luca De~Alfaro.
\newblock {\em Stochastic transition systems}.
\newblock Springer, 1998.

\bibitem{derman1970finite}
Cyrus Derman.
\newblock {\em Finite state Markovian decision processes}.
\newblock Academic Press, Inc., 1970.

\bibitem{thesis}
Nargess Ghahremani.
\newblock {\em Petri Nets, Probability and Event Structures}.
\newblock PhD thesis, University of Edinburgh, 2014.

\bibitem{haar2002probabilistic}
Stefan Haar.
\newblock Probabilistic cluster unfoldings for petri nets.
\newblock 2002.

\bibitem{Jha15}
Sumit~Kumar Jha, Madhavan Mukund, Ratul Saha, and P.~S. Thiagarajan.
\newblock Distributed markov chains.
\newblock In {\em Proc. 16th Int. Conf. on Verification, Model Checking, and
  Abstract Interpretation (VMCAI)}, 2015.

\bibitem{jones1989probabilistic}
Claire Jones and Gordon~D Plotkin.
\newblock A probabilistic powerdomain of evaluations.
\newblock In {\em Logic in Computer Science, 1989. LICS'89, Proceedings.,
  Fourth Annual Symposium on}, pages 186--195. IEEE, 1989.

\bibitem{kartson1994modelling}
D.~Kartson, G.~Balbo, S.~Donatelli, G.~Franceschinis, and G.~Conte.
\newblock {\em Modelling with generalized stochastic Petri nets}.
\newblock John Wiley \& Sons, Inc., 1994.

\bibitem{katoen1996quantitative}
Joost-Pieter Katoen.
\newblock {\em Quantitative and qualitative extensions of event structures}.
\newblock PhD thesis, University of Twente, 1996.

\bibitem{pnueli1993probabilistic}
Amir Pnueli and Lenore~D Zuck.
\newblock Probabilistic verification.
\newblock {\em Information and computation}, 103(1):1--29, 1993.

\bibitem{puterman2009markov}
Martin~L. Puterman.
\newblock {\em Markov decision processes: discrete stochastic dynamic
  programming}, volume 414.
\newblock Wiley. com, 2009.

\bibitem{rabin1963probabilistic}
Michael~O. Rabin.
\newblock Probabilistic automata.
\newblock {\em Information and control}, 6(3):230--245, 1963.

\bibitem{segala1996modeling}
Roberto Segala.
\newblock Modeling and verification of randomized distributed real-time
  systems.
\newblock 1996.

\bibitem{segala1995probabilistic}
Roberto Segala and Nancy Lynch.
\newblock Probabilistic simulations for probabilistic processes.
\newblock {\em Nordic Journal of Computing}, 2(2):250--273, 1995.

\bibitem{tix2009semantic}
Regina Tix, Klaus Keimel, and Gordon Plotkin.
\newblock Semantic domains for combining probability and non-determinism.
\newblock {\em Electronic Notes in Theoretical Computer Science}, 222:3--99,
  2009.

\bibitem{varacca2003probabilistic}
Daniele Varacca and Mogens Nielsen.
\newblock Probabilistic petri nets and mazurkiewicz equivalence.
\newblock 2003.

\bibitem{varacca2004probabilistic}
Daniele Varacca, Hagen V{\"o}lzer, and Glynn Winskel.
\newblock Probabilistic event structures and domains.
\newblock In {\em CONCUR 2004-Concurrency Theory}, pages 481--496. Springer,
  2004.

\bibitem{vardi1985automatic}
Moshe~Y. Vardi.
\newblock Automatic verification of probabilistic concurrent finite state
  programs.
\newblock In {\em Foundations of Computer Science, 1985., 26th Annual Symposium
  on}, pages 327--338. IEEE, 1985.

\bibitem{volzer2001randomized}
Hagen V{\"o}lzer.
\newblock Randomized non-sequential processes.
\newblock In {\em CONCUR 2001—Concurrency Theory}, pages 184--201. Springer,
  2001.

\bibitem{W86}
Glynn Winskel.
\newblock Event structures.
\newblock In {\em Advances in Petri Nets}, pages 325--392, 1986.

\bibitem{winskel2013distributed}
Glynn Winskel.
\newblock Distributed probabilistic strategies.
\newblock In {\em 29th Conference on the Mathematical Foundations of
  Programming Semantics}, 2013.

\bibitem{yi1992testing}
Wang Yi and Kim~Guldstrand Larsen.
\newblock Testing probabilistic and nondeterministic processes.
\newblock In {\em PSTV}, volume~12, pages 47--61, 1992.

\end{thebibliography}
\newpage
\section*{Appendix}

\begin{proof}[Proof of Theorem \ref{thm:ses2pes}]
($\Rightarrow$) follows from the definition of consistency relation. More precisely, if $\exists p,p'\in X.\;p \# p'$ then it follows that $p$ and $p'$ are not consistent (by definition of $\#$) and therefore,  they are not compatible as configurations of $\es_0$. Thus, $X \notin Con_P$ which is a contradiction. 

($\Leftarrow$) follows from the definition of conflict-driven \abbrses/s. More specifically, suppose by contradiction that there is a finite set of events $X$ s.t. $\forall p,p' \in X.\; \neg(p\#p')$ and $X \notin Con_P$. Let $X=\{p_i \mid p_i = \lceil e_i \rceil_{v_i} \;\&\; v_i \in \confs(\es_0)\}$. Since $X \notin Con_P$ this implies that $p_i$ as configurations of $\es_0$ are not compatible, in other words, if we let $\bar{X} = \cup \lceil e_i \rceil_{v_i}$, then $\bar{X} \notin Con$. Now since $\es_0$ is conflict-driven, then for any $T \in {}_*\bar{X}.\;\exists e_1,e_2 \in \bigcup T.\; e_1\#e_2$. For such $e_1, e_2$, suppose $p_1, p_2 \in X.\; p_1 = \lceil e\rceil_{v_1} \;\&\; e_1 \in p_1 \;\&\; p_2 = \lceil e' \rceil_{v_2} \;\&\; e_2 \in p_2$. Then $p_1$ and $p_2$ are not compatible as configurations of $\es_0$, therefore $\{p_1,p_2\} \notin Con_P$, which contradicts $\forall p,p'\in X.\; \neg(p \# p')$. Therefore, $X$ must be consistent, i.e.\;$X\in Con_P$.
\end{proof}

The proof of theorem \ref{thm:bc-iso-bc-} requires a few trivial
lemmas.

\begin{lemma}The following facts are immediate from the relevant
  definitions:

\abbrpes/s are sensible.

\label{lem:sens}
Let $\es_0$ be a sensible \abbrses/ and let $(P,Con_P,\le) = \Theta(\es)$. Then
$$ e \leq_x e' \Leftrightarrow \lceil e \rceil_x \subseteq \lceil e' \rceil_x$$
$$ X \in Con \Leftrightarrow \forall v \in \confs(\es_0) \;s.t.\; X \subseteq v.\; \{\lceil e \rceil_v \mid e \in X \} \in Con_P$$
\end{lemma}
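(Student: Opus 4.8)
The plan is to treat the statement as three essentially independent definitional unfoldings: (i) that \abbrpes/s are sensible, (ii) the history characterisation of $\leq_x$, and (iii) the $Con$/$Con_P$ correspondence. None of these needs a new idea; the work is careful bookkeeping against Definitions \ref{pes}, \ref{ses}, \ref{def:conflict-driven-ses} (sensibility), and \ref{def:trans-ses-pes-} ($\Theta$), together with the stability/finite-completeness properties of configuration families recorded just after Definition \ref{ees:conf}.

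For fact (i) I would prove both implications of $X\in Con \Leftrightarrow \exists v\in\confs(\es).\,X\subseteq v$. The $(\Leftarrow)$ direction is immediate: any $X\in Con$ is finite since $Con\subseteq\wp_{\mathrm{fin}}(E)$, and a configuration $v$ is finitely consistent, so $X\subseteq v$ forces $X\in Con$. For $(\Rightarrow)$, given $X\in Con$ I would take $v=\lceil X\rceil=\bigcup_{e\in X}\lceil e\rceil$; this is finite (each $\lceil e\rceil$ is finite by the \abbrpes/ axiom), $\leq$-down-closed by construction, and lies in $Con$ because $Con$ is closed under adding causes of events, applied finitely often starting from $X$. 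Hence $v$ is a configuration containing $X$.

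For fact (ii) I would first record that the least history $\lceil e\rceil_x$ exists whenever $e\in x$: the configurations $v\subseteq x$ with $e\in v$ are all bounded by $x$, hence mutually compatible, and a stable family is closed under intersection of compatible configurations, so their intersection is the least such configuration. The engine is then the sub-lemma $e\leq_x e' \Leftrightarrow e\in\lceil e'\rceil_x$: for $(\Rightarrow)$ apply the definition of $\leq_x$ to the configuration $\lceil e'\rceil_x$, which lies below $x$ and contains $e'$; for $(\Leftarrow)$, any config $v\subseteq x$ with $e'\in v$ satisfies $\lceil e'\rceil_x\subseteq v$ by minimality, so $e\in\lceil e'\rceil_x$ yields $e\in v$. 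Finally $\lceil e\rceil_x\subseteq\lceil e'\rceil_x$ is equivalent to $e\in\lceil e'\rceil_x$: one direction because $e\in\lceil e\rceil_x$, the other by minimality of $\lceil e\rceil_x$ applied to the configuration $\lceil e'\rceil_x$, which contains $e$. Chaining these gives the stated equivalence.

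For fact (iii) I would unfold $Con_P$ from Definition \ref{def:trans-ses-pes-}: a finite set of primes is in $Con_P$ iff it is compatible, where (exactly as used in the proof of Theorem \ref{thm:ses2pes}) compatibility of primes means compatibility as configurations of $\es_0$, i.e.\ that their union is a configuration. The key observation is that each $\lceil e\rceil_v\subseteq v$, so $\{\lceil e\rceil_v\mid e\in X\}$ is bounded by $v$ and, being a finite family of configurations below a common bound, has union a configuration by finite-completeness; thus this set is automatically in $Con_P$ whenever $v\supseteq X$ is a configuration. The $(\Rightarrow)$ direction then follows: $X\in Con$ gives, by sensibility, a configuration $v\supseteq X$, and for every such $v$ the prime family lies in $Con_P$. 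The $(\Leftarrow)$ direction is carried by sensibility itself, $X\in Con\Leftrightarrow\exists v\in\confs(\es_0).\,X\subseteq v$, so I would present this fact as sensibility re-expressed through the translation $\Theta$, flagging that the quantifier over $v$ must be read in tandem with sensibility since the $Con_P$-membership clause is automatic. The main obstacle, such as it is, lies here: one must correctly identify that compatibility in $Con_P$ is compatibility in the \emph{configuration domain} of $\es_0$ (union is a configuration) rather than the existence of a bounding prime, and must invoke stability and finite-completeness to ensure that the relevant least histories and unions are genuinely configurations. Everything else reduces to routine substitution of the definitions of $\leq_x$, $\lceil-\rceil_x$, sensibility, and $\Theta$.
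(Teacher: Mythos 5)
Your proof is correct and is exactly the definitional unfolding the paper intends: the paper gives no argument beyond declaring these facts immediate from the definitions, and your three-part verification (down-closure $\lceil X\rceil$ witnessing sensibility of a \abbrpes/, the sub-lemma $e\leq_x e'\Leftrightarrow e\in\lceil e'\rceil_x$ via minimality of histories, and the boundedness of $\{\lceil e\rceil_v\mid e\in X\}$ by $v$) supplies precisely the routine details being elided. Your one substantive observation --- that $\uparrow$ in $Con_P$ must be read as compatibility in the configuration domain of $\es_0$ (as the paper's own proof of Theorem~\ref{thm:ses2pes} confirms), which makes the $Con_P$ clause automatic and leaves the $\forall v$ quantifier vacuously true when no configuration contains $X$, so that the real content of the second equivalence is sensibility itself --- is a genuine, correctly handled imprecision in the lemma's statement rather than a gap in your argument.
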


\begin{lemma}\label{lem:sens-}
Let $\es$ be a conflict-driven \abbrses/ with associated \abbres/ $\hat\es$. Then we have $$e \#_{\mu,v} e' \Leftrightarrow \lceil e \rceil_v \#_\mu \lceil e' \rceil_v.$$
\end{lemma}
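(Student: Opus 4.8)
The plan is to transport both immediate-conflict conditions across the correspondence between the conflict-driven \abbrses/ $\es$ and $\Theta(\es)$ recorded in Lemma~\ref{lem:sens}, after which the two conditions reduce to the same set-equality. Write $p=\lceil e\rceil_v$ and $p'=\lceil e'\rceil_v$; these are events of $\hat\es$, and they exist precisely when $v$ enables $e$ and $e'$, which is exactly the $v\vdash e\;\&\;v\vdash e'$ part of the definition of $\#_{\mu,v}$. So that conjunct matches the bare existence of $p$ and $p'$, and it remains to match the conditions $\#_v\cap(\lceil e\rceil_v\times\lceil e'\rceil_v)=\{(e,e')\}$ and $\#\cap(\lceil p\rceil\times\lceil p'\rceil)=\{(p,p')\}$, where the second $\#$ is the binary conflict of $\hat\es$ and $\lceil p\rceil$ is the $\le$-down-closure of $p$.

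First I would set up the bijection on down-closures. By the first displayed equivalence of Lemma~\ref{lem:sens} (together with the fact that, inside a fixed configuration, a history is determined by its top event), $g\mapsto\lceil g\rceil_v$ is an order-isomorphism from the causal history $\lceil e\rceil_v$ of $e$ onto the $\le$-down-closure $\lceil p\rceil$ of $p$ in the associated \abbres/ $\hat\es$, and similarly for $e'$ and $p'$. Hence $(g,g')\mapsto(\lceil g\rceil_v,\lceil g'\rceil_v)$ is a bijection between the pairs occurring in $\lceil e\rceil_v\times\lceil e'\rceil_v$ and those occurring in $\lceil p\rceil\times\lceil p'\rceil$, and it sends the distinguished pair $(e,e')$ to $(p,p')$.

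Next I would transfer conflict across this bijection, i.e.\ establish $g\#_v g'\Leftrightarrow\lceil g\rceil_v\#\lceil g'\rceil_v$, where by Theorem~\ref{thm:ses2pes} the right-hand side unfolds to $\{\lceil g\rceil_v,\lceil g'\rceil_v\}\notin Con_P$, i.e.\ incompatibility of the two histories. For every pair other than $(e,e')$, at least one of $g,g'$ is a proper cause and hence lies in $v$; then $\{g,g'\}\cup v$ is contained in $v$, $v\cup\{e\}$ or $v\cup\{e'\}$, each of which is in $Con$ since $v\vdash e$ and $v\vdash e'$, so $g\#_v g'$ fails, and correspondingly $\lceil g\rceil_v\cup\lceil g'\rceil_v$ lies inside a configuration, so the two histories are compatible and the $\hat\es$-conflict fails as well. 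For the top pair I would invoke conflict-drivenness: in the forward direction, $e\#_{\mu,v}e'$ yields $e\#e'$ by condition~3, so no configuration contains both $e$ and $e'$, whence $p$ and $p'$ are incompatible and $\{p,p'\}\notin Con_P$; in the backward direction, incompatibility of $p$ and $p'$ means that no configuration extends $\lceil e\rceil_v\cup\lceil e'\rceil_v$, so by sensibility $\{e,e'\}\cup v\notin Con$, that is, $e\#_v e'$.

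Combining the two steps, the equality $\#_v\cap(\lceil e\rceil_v\times\lceil e'\rceil_v)=\{(e,e')\}$ holds iff, under the bijection and the conflict transfer, the only conflicting pair in $\lceil p\rceil\times\lceil p'\rceil$ is $(p,p')$, which is exactly $p\#_\mu p'$; restoring the enabling conjunct then gives the stated equivalence. The one delicate point --- and the only step that is not purely formal --- is the top pair $(e,e')$: this is where the $v$-local conflict $\#_v$ and the global incompatibility encoded by $Con_P$ could differ, and it is precisely condition~3 of conflict-drivenness (promoting an immediate $v$-conflict to a genuine conflict) together with sensibility (tying $Con$ to the reachable configurations) that forces them to coincide. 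Every other pair is harmless, since at least one of its members is a cause already present in the consistent configuration $v$.
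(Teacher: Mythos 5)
Your proof is correct and takes essentially the same route as the paper, which simply asserts that the lemma ``follows trivially from Lemma~\ref{lem:sens} and the definitions of $\#_{\mu,v}$ and $\#_\mu$'': you transport the conditions across the correspondence of Lemma~\ref{lem:sens} exactly as intended. Your writeup is considerably more detailed than the paper's one-line argument, and usefully pinpoints that the only non-formal step is the top pair $(e,e')$, where condition~3 of conflict-drivenness and sensibility are genuinely needed.
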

\begin{proof}
Follows trivially from lemma \ref{lem:sens} and definitions of $\#_{\mu,v}$ and $\#_\mu$.
\end{proof}

\begin{lemma}
Let $\es$ be a jump-free conflict-driven \abbrses/. Then $\hat\es$ is jump-free.
\end{lemma}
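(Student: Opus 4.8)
The plan is to argue by contraposition: assuming $\hat\es$ contains a \emph{jump} in the sense of Definition \ref{def:jump-free-es}, I would exhibit a jump in $\es$ in the sense of Definition \ref{def:jump-free-ses}, contradicting the hypothesis that $\es$ is jump-free. The two correspondence lemmas already in hand do almost all of the translation: Lemma \ref{lem:sens} identifies the local causal order $<_v$ of $\es$ with the causal order $\subsetneq$ on the primes of $\hat\es$ (each prime being a history $\lceil e\rceil_v$), and Lemma \ref{lem:sens-} identifies $e \#_{\mu,v} e'$ with $\lceil e\rceil_v \#_\mu \lceil e'\rceil_v$ in $\hat\es$. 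Hence the only real content of the proof is to show that a \emph{global} jump in $\hat\es$ can be realised \emph{inside a single configuration} of $\es$, as the stable definition demands (all the witnessing sub-configurations $v_i$ must lie below one common $v$).

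First I would take a jump in $\hat\es$ whose conflict chain is of \emph{shortest possible} length: primes $\hat e < \hat e'$ together with $\hat e \#_\mu \hat e_1,\ \hat e_i \#_\mu \hat e_{i+1}\ (1\le i<k),\ \hat e_k \#_\mu \hat e'$ with $k>1$. Writing each prime as a history of its top event, $\hat e = \lceil e\rceil$, $\hat e_i = \lceil e_i\rceil$, $\hat e' = \lceil e'\rceil$, I record the associated cause-sets $c_e, c_{e_i}, c_{e'}$, so that $\hat e = c_e\cup\{e\}$ and so on. Since $\hat e\subsetneq\hat e'$, the event $e$ lies in the configuration $\hat e'$ of $\es$ with $e\neq e'$ (the latter because the counit $\tilde\theta_\es$ is a morphism), and Lemma \ref{lem:sens} applied with $x=\hat e'$ immediately gives $e <_{\hat e'} e'$.

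Next I would assemble the common configuration. Set $v := \hat e' \cup c_{e_1}\cup\cdots\cup c_{e_k}$, viewed as a set of events of $\es$. The key claim is that $v$ is conflict-free, hence consistent (conflict is binary in $\hat\es$ by Theorem \ref{thm:ses2pes}), and so a configuration extending $\hat e'$; adjoining the lateral causes $c_{e_i}$ leaves $\lceil e\rceil_v$ and $\lceil e'\rceil_v$ unchanged, so $e<_v e'$ is preserved and each $c_{e_i}\subseteq v$. Consecutive cause-sets are jointly conflict-free because immediate conflict $\hat e_i\#_\mu\hat e_{i+1}$ forbids conflicts among causes; a conflict between two \emph{non-consecutive} parts of the chain would, by conflict-drivenness and the persistence of conflict, yield a shorter conflict chain joining $\hat e$ and $\hat e'$, contradicting minimality. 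With $v$ in hand, each edge of the chain is witnessed by the sub-configuration $c_{e_i}\cup c_{e_{i+1}}\subseteq v$ (and the two end edges by $c_e\cup c_{e_1}$ and $c_{e_k}\cup c_{e'}$), so Lemma \ref{lem:sens-} converts $\hat e_i\#_\mu\hat e_{i+1}$ into $e_i \#_{\mu,v_i} e_{i+1}$ with $v_i\subseteq v$. Together with $e<_v e'$ this is exactly a jump of $\es$, the desired contradiction.

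The main obstacle is precisely the consistency/amalgamation claim for $v$: the immediate-conflict hypotheses only control \emph{adjacent} events of the chain, so I must use minimality of the chain, together with the persistence of conflict for conflict-driven structures, to rule out a conflict between distant cause-sets that would otherwise prevent the separate witnesses from living below a single configuration. The secondary technical point is the invariance of the histories $\lceil e\rceil_v,\lceil e'\rceil_v$ under the enlargement from $\hat e'$ to $v$, which I would derive from stability, noting that the adjoined causes are concurrent with $e$ and $e'$ in $v$.
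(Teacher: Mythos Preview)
The paper's own proof is a single sentence: ``Follows trivially from lemmas \ref{lem:sens} and \ref{lem:sens-}.'' You have gone well beyond this and correctly isolated the only non-formal content, namely that a $\#_\mu$-chain in $\hat\es$ --- where immediate conflict is a global binary relation --- must be realised inside a \emph{single} configuration of $\es$, because Definition \ref{def:jump-free-ses} requires every witness $v_i$ to lie below one common $v$. Your contrapositive set-up and the candidate $v=\hat e'\cup c_{e_1}\cup\dots\cup c_{e_k}$ are the natural choices; the paper simply does not engage with this amalgamation at all.

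The gap is in your consistency argument for $v$. You claim that a conflict between two \emph{non-adjacent} pieces of the chain would, by conflict-drivenness and persistence, yield a shorter $\#_\mu$-chain joining $\hat e$ and $\hat e'$, contradicting minimality of $k$. This inference does not go through. A conflict between a cause $\hat a<\hat e_i$ and a cause $\hat b<\hat e_j$ with $|i-j|>1$ is in general only \emph{inherited}; descending to the underlying immediate pair $\hat a'\#_\mu\hat b'$ lands strictly below $\hat e_i$ and $\hat e_j$. But $\hat e_{i-1}\#_\mu\hat e_i$ forces every proper cause of $\hat e_i$ --- in particular $\hat a'$ --- to be \emph{compatible} with all of $\lceil\hat e_{i-1}\rceil$, so you cannot splice $\hat a'$ into the existing chain by an immediate-conflict edge, and no shorter $\#_\mu$-path between the fixed endpoints $\hat e$ and $\hat e'$ results. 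The same objection applies to a case you do not explicitly treat: $\hat e'$ itself, which you keep in $v$, may conflict with a proper cause of some intermediate $\hat e_i$ with $i<k$, and the only adjacency constraint touching $\hat e'$ is the edge to $\hat e_k$. In short, minimality of the chain length does not by itself force pairwise compatibility of the cause-sets, and this is precisely the step your argument still owes (or must replace by a different construction of the common $v$).
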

\begin{proof}
Follows trivially from lemmas \ref{lem:sens} and \ref{lem:sens-}.
\end{proof}

We now establish the desired isomorphism of branching cells. First:

\begin{lemma}\label{lem:morph-conf}
Let $\es$ be a conflict-driven \abbrses/.
Then, for $p\neq p'$ if $\Theta_\es(p)=\Theta_\es(p') \Rightarrow p \# p'
\;\&\; \neg(p \#_\mu p')$, where $\Theta_\es$ is the mapping from the
events of $\es$ to the events of $\Theta(\es)$ (definition \ref{def:trans-ses-pes-}).
\end{lemma}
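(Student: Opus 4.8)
The plan is to write $p=\lceil e\rceil_x$ and $p'=\lceil e\rceil_{x'}$ for the common underlying event $e$ (so that $p$ and $p'$ are two distinct causal histories of $e$), and to treat the two conjuncts separately. Throughout, conflict and immediate conflict are read in $\hat\es$ (equivalently in $\Theta(\es)$), and I freely use that $q\le q'$ in $\Theta(\es)$ means $q\subseteq q'$, that configurations of $\Theta(\es)$ correspond to configurations of $\es$, and that every configuration is conflict-free (a finite subset of a configuration lies in $Con$, so it cannot contain a conflicting pair).

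For $p\#p'$ I argue purely from stability. If $\{p,p'\}\in Con_P$ then $p$ and $p'$ are compatible, so there is an event $q\in P$ — itself a configuration of $\es$ — with $\lceil e\rceil_x\subseteq q$ and $\lceil e\rceil_{x'}\subseteq q$. Then $\lceil e\rceil_x$ and $\lceil e\rceil_{x'}$ are both configurations below $q$ containing $e$, and minimality of the causal history forces $\lceil e\rceil_x=\lceil e\rceil_q=\lceil e\rceil_{x'}$, contradicting $p\neq p'$. Hence $\{p,p'\}\notin Con_P$, i.e.\ $p\#p'$; equivalently, by sensibility, $\bar X:=\lceil e\rceil_x\cup\lceil e\rceil_{x'}\notin Con$.

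For $\neg(p\#_\mu p')$ I must exhibit a conflicting pair in $\lceil p\rceil\times\lceil p'\rceil$ other than $(p,p')$, and here I feed $\bar X\notin Con$ into the conflict-driven property. Choosing the witness $T\in{}_*\bar X$ that takes $\lceil f\rceil_x$ for $f\in\lceil e\rceil_x$ and $\lceil f\rceil_{x'}$ for $f\in\lceil e\rceil_{x'}\setminus\lceil e\rceil_x$, one computes $\bigcup T=\bar X$, so the derived clause of the conflict-driven definition yields $e_1,e_2\in\bar X$ with $e_1\#e_2$. Since $\lceil e\rceil_x$ and $\lceil e\rceil_{x'}$ are each conflict-free, the pair must straddle the two histories; up to swapping, $e_1\in\lceil e\rceil_x\setminus\lceil e\rceil_{x'}$ and $e_2\in\lceil e\rceil_{x'}\setminus\lceil e\rceil_x$, and in particular $e_1\neq e\neq e_2$ because $e$ lies in both histories. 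Setting $q=\lceil e_1\rceil_x$ and $q'=\lceil e_2\rceil_{x'}$, minimality gives $q\subseteq\lceil e\rceil_x=p$ and $q'\subseteq\lceil e\rceil_{x'}=p'$, so $q\in\lceil p\rceil$ and $q'\in\lceil p'\rceil$; the same no-common-configuration argument as above, now applied to $e_1\#e_2$, gives $q\#q'$; and $q\neq p$ because $e_1\neq e$ forces $e\notin\lceil e_1\rceil_x$. Thus $(q,q')\neq(p,p')$ is a conflicting pair in $\lceil p\rceil\times\lceil p'\rceil$, so the immediate-conflict condition $\#\cap(\lceil p\rceil\times\lceil p'\rceil)=\{(p,p')\}$ fails, i.e.\ $\neg(p\#_\mu p')$.

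The routine half is $p\#p'$, a direct consequence of unique causal histories that uses nothing beyond stability. The substantive step is $\neg(p\#_\mu p')$, and its delicate points are: choosing the witness $T$ so that $\bigcup T$ is exactly $\bar X$ (an arbitrary $T\in{}_*\bar X$ could drag in events outside $\bar X$ and lose control of where the conflicting pair lives), and then using conflict-freeness of configurations to force that pair to straddle the two histories and to lie strictly below $e$. This is precisely where conflict-drivenness is needed; the first conjunct alone would hold for any stable event structure.
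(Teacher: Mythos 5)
Your proof is correct and follows essentially the same route as the paper's: stability forces the two distinct histories of $e$ to be incompatible (giving $p \# p'$), and conflict-drivenness then produces a conflicting pair lying strictly inside the two causal pasts, which defeats $p \#_\mu p'$. You are in fact somewhat more careful than the paper at two points it elides --- choosing the witness $T$ so that $\bigcup T$ is exactly $\bar X$, and arguing that the conflicting pair must straddle the two histories and be distinct from $e$.
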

\begin{proof}
Suppose $\Theta_\es(p)=\Theta_\es(p')=e$, then $p=\lceil e \rceil_v$, $p'=\lceil e \rceil_{v'}$. Let $v_0$ be the subset of $v$ s.t.\ $v_0\vdash_{min} e$ and similarly, let $v_1$ be the subset of $v'$ s.t.\ $v_1 \vdash_{min} e'$. By the stability axiom it follows that $v_0 \cup v_1 \notin Con$. It is then obvious that  $p \# p'$. Since $\es$ is conflict-driven, it follows that $\exists e_0 \in v_0, e_1 \in v_1.\; e_0 \# e_1$, and therefore, $\lceil e_0 \rceil_v \# \lceil e_1 \rceil_{v'}$  and since  $\lceil e_0 \rceil_v < \lceil e \rceil_v$ and $\lceil e_0 \rceil_{v'} < \lceil e \rceil_{v'}$ it follows that $\neg (p \#_\mu p')$.
\end{proof}

\begin{proof}[Proof of Theorem \ref{thm:bc-iso-bc-}]
Let $\vartheta = \tilde\theta_\es$ (definitions
\ref{def:trans-ses-pes-}, \ref{def:hatted}).
First consider configuration $v$ of  $\hat{\es}$ and $v'=\vartheta(v)$ of  $\mathcal{E}$. Since configurations of $\hat\es$ and $\es$ are isomorphic, it is clear that $\exists e\in E\setminus v.\; v \cup \{e\} \in \confs(\es)\Leftrightarrow \exists e'\in E\setminus v'.\; v' \cup \{e'\} \in \confs(\es)$. In other words, every initial event in future of $v$ has an associated initial event in future of $v'$ and vice versa.
Let $\hat\es^v_0$ and  $\es^{v'}_0$ represent the initial events of each structure, respectively. Then we show that $\vartheta$ yields a bijection between $\hat\es^v_0$ and  $\mathcal{E}^{v'}_0$.

Suppose $e,e'\in \hat\es^v_0$ and $\vartheta(e)=\vartheta(e')=e''$. By
lemma \ref{lem:morph-conf} it follows that $e \# e'$ and $\neg e
\#_\mu e'$, i.e.\;there is a conflict in their past. But this is a
contradiction as they are both initial events in future of a
configuration which is conflict-free. As shown above, every initial
event in future of $v'$ has an associated event in future of $v$,
therefore, $\vartheta$ (applied to initial events $\hat\es^v$) is onto
the initial events of $\es^{v'}$. It then follows that $\vartheta$ yields a bijection between the events of $\hat\es^v_0$ and $\mathcal{E}^{v'}_0$, making them isomorphic. 

Furthermore, as we are dealing with the initial events that can occur in future of $v'$ i.e.\ immediately after $v'$, the immediate conflict relation among the events of $\mathcal{E}^{v'}_0$ is resolved, in the sense that it does not depend on any configuration in $\es_0^{v'}$, and therefore, is obviously compatible with the immediate conflict relation of $\hat\es^v_0$. Thus, the branching cells of $\hat\es^v_0$ and  $\mathcal{E}^{v'}_0$ are isomorphic, which implies the branching cells of $\hat\es$ and those of $\es$ are isomorphic.  
\end{proof}

\end{document}